\def\beq{\begin{equation}}
\def\eeq{\end{equation}}
\def\bea{\begin{eqnarray}}
\def\eea{\end{eqnarray}}
\def\p{\partial}
\def\G{\Gamma}
\def\g{\gamma}
\def\s{\sigma}
\def\zb{\bar z}
\def\a{\alpha}
\def\b{\beta}
\def\res{{\rm res}}
\def \matrix #1 {\left(\begin{array}{cc} #1 \end{array}\right)}
\newtheorem{theo}{Theorem}[section]
\newtheorem{cor}[theo]{Corrolary}
\newtheorem{lem}[theo]{Lemma}
\theoremstyle{definition}
\newtheorem{rem}[theo]{Remark}
\def\bpf{\hfill\break{  \textbf{ Proof:}} }
\numberwithin{equation}{section}
\theoremstyle{definition}
\title[Integrable 2D Schr\"odinger operators]{Two-dimensional Periodic Schr\"odinger Operators Integrable at Energy Eigenlevel}
\author{A.Ilina, I.Krichever, N.Nekrasov}
\address{Skolkovo Institute for Science and Technology; National Research University "Higher School of Economics", Moscow}
\email{ekrez@yandex.ru}
\address{Skolkovo Institute for Science and Technology, Moscow; Columbia University, New York; National Research University "Higher School of Economics", Moscow}
\email{krichev@math.columbia.edu}
\address{Skolkovo Institute for Science and Technology, Moscow; Simons Center for Geometry and Physics, Stony Brook}
\email{nikitastring@gmail.com}
\thanks{The work of A.I was supported by the Russian Academic Excellence Project 5-100 and the RFBR grant 18-01-00273a.}
\begin{document}
\maketitle
\begin{abstract}\vspace*{2pt}
\noindent
The main goal of the first part of the paper is to show  that the Fermi curve of a two-dimensional periodic Schr\"odinger operator with nonnegative  potential whose points parameterize the Bloch solutions of the Schr\"odinger equation at the zero energy level is a smooth $M$-curve. Moreover, it is shown that the poles of the Bloch solutions are located  on the fixed ovals of  an antiholomorphic involution so that each but one oval contains precisely one pole. The topological type is stable until, at some value of the deformation parameter, the zero level becomes an eigenlevel  for the Schr\"odinger operator on the space of  (anti)periodic functions. The second part of the paper is devoted to the construction of such operators with the help of a generalization of the Novikov--Veselov construction.
\end{abstract}

\section{Introduction}

The theory of periodic two-dimensional operators integrable at  {\it one energy level} goes back to the work
\cite{dkn}, in which an algebraic-geometric construction of integrable two-dimensional Schr\"odineger operators
\beq \label{magnetic}
\tilde H=(i\p_x+A_1(x,y))^2+(i\p_y +A_2(x,y))^2+u(x,y)
\eeq
in magnetic field was proposed. The shift  $u\to u-E$ of the potential transforms the equation $\tilde H\psi=E\psi$ into $\tilde H\psi=0$. Hence, without loss of generality, it will  always be assumed that the level equals  zero.

The construction of the work \cite{dkn} is based on the notion of the {\it two-point two-parameter  Baker-Akhiezer function} $\psi(x,y,p)$, which is uniquely determined by a smooth genus $g$ algebraic curve $\G$ with two marked points $P_\pm$ and an effective nonspecial divisor  $D=\g_1+\cdots+\g_g$. The Baker-Akhiezer function and the coefficients of the operator $\tilde H$ were explicitly written in terms of the Riemann theta-function associated with the curve  $\G$.

In \cite{nv1} and \cite{nv2} Novikov and Veselov found  sufficient conditions on the algebraic-geometric data $\{\G,P_\pm,D\}$  under which corresponding  operators
\beq
H=-\Delta+u(x,y), \ \Delta:=\p_x^2+\p_y^2=4\p_z\p_{\bar z}
\eeq
are potential, i.e.,  $A_i\equiv 0$. The corresponding curves must have a  holomorphic involution $\s:\G\to \G$ with exactly  two fixed points $P_\pm=\s(P_\pm)$. We emphasize that the latter condition turns out to be crucial for another remarkable Novikov-Veselov result, namely, an explicit expression for  the corresponding Baker-Akhiezer functions in terms of the Prym theta-function.

The nature of the Novikov-Veselov conditions was clarified in the work \cite{kr-spec}, where a constructive description of the {\it complex Fermi curve} of a two-dimensional Schr\"odinger operator was proposed. By definition,  the complex Fermi curve is a Riemann surface $\G^{{\rm Fermi}}$ (in what follows, we use the notation  $\G^{\rm F}$) whose points $p\in \G^{{\rm F}}$ parameterize the solutions of the equation
\beq \label{shrod}
\left(-\Delta+u(x,y)\right)\psi(x,y,p)=0,\ \ \Delta:=\p_x^2+\p_y^2=4\p_z\p_{\bar z},
\eeq
which are eigenvectors for the monodromy operators
\beq\label{bloch1}
\psi(x+2\pi \ell_1,y,p)= w_1(p)\psi(x,y,p), \ \ \psi(x,y+2\pi \ell_2,p)=w_2(p)\psi(x,y,p).
\eeq
In the spectral theory of periodic operators these solutions are called Bloch-Floque solutions, and the set of pairs
$(w_1,w_2)\in (C^*)^2$ for which there exists a  Bloch solution with such Floque multipliers is called the Bloch-Floquet locus. Below it will be denoted by $\G^{ BF}$. The statement that this locus is defined by an analytic equation $R(w_1,w_2)=0$, i.e., is an analytic complex curve, was proved by Taimanov as a corollary of Keldysh's theorem on the resolvent of a family of completely continuous operators (for details and the history of the question, see  \cite{kr-spec} and \cite{taim}). The Fermi curve is a partial normalization of the Bloch-Floque curve, i.e., there is a holomorphic map $\G^{ F}\longmapsto \G^{ BF}$ which is one-to-one outside the  singular points of $\G^{ BF}$ (and their preimages on  $\G^F$). For a generic potential, the curve $\G^{BF}$ is smooth and the notions of the Fermi curve and the Bloch-Floque spectral curve coincide.
\smallskip

One of the goals of this work is to generalize the Novikov-Veselov construction for the case of {\it zero eigenlevel}, namely, for the case where $E=0$ is an eigenvalue of the operator $H$ in the space of (anti)periodic functions. This case is special in many respects. In particular,  if the multiplicity of this level is odd, then the corresponding  Fermi curve  is necessarily singular.

Let us clarify the last statement. The Schr\"odinger operator is self-adjoint. Therefore, if $\psi$ is a Bloch solution of (\ref{shrod}) with the multipliers $(w_1, w_2)$, then there is a dual Bloch solution $\psi^{\s}$ of the same equation with the multipliers $(w_1^{-1}, w_2^{-1})$, i.e., the Fermi curve is invariant under the holomorphic involution
\beq\label{inv}
\sigma:  \G^{f} \longmapsto \G^{f}, \ (w_1, w_2)\longmapsto  (w_1^{-1}, w_2^{-1}).
\eeq
This involution has fixed points  only when $E=0$ is an eigenlevel of the  (anti)periodic problem for the operator $H$. These points are the nodal points of the Fermi curve, which is therefore  singular.

It is known that any singular curve admits a normalization, i.e., a holomorphic map $\nu:\Gamma\to \G^{f}$ of a smooth curve $\G$ to $\G^{f}$ which is one-to-one outside  the singular points of $\G^{f}$ (and its preimages on $\G$). Although the fact that the Fermi curve is singular is important for the motivation of the construction, the construction  itself is based on a description of the analytical properties of the preimage of the Baker-Akhiezer function on the normalization of the spectral curve, i.e., on the  smooth curve $\G$. The corresponding curves have a  holomorphic involution with $n+1$ pairs of fixed points $(P_\pm, p^1_\pm,\ldots p^n_\pm)$ . Note that the analytical properties of the Baker-Akhiezer functions  on such curves for $n>0$ coincide with those of functions leading to the so-called "multisoliton on the finite-gap background" solutions of the integrable equations. In terms of the Riemann theta-function these  Baker-Akhiezer functions are expressed  by complicated determinant-type formulas. It has turned out that in the case under consideration their expression in terms of  appropriately defined Prym theta-functions remains the same for any $n$.

The generalized Novikov-Veselov construction is presented in Section 4. From  results of  Sections 2 and 3 it will be clear that it plays a special role in the spectral theory of periodic Shr\"odenger operators. We emphasize that, despite the significant progress made in \cite{kr-spec}, the problem of constructing of the periodic spectral theory of such operators in full generality  remains  open. In particular, there is no proof that the  sufficient conditions   on the algebraic-geometric spectral data found in \cite{nv1},\cite{ kr-spec} and \cite{ nat}  are  also necessary for  Schr\"odinger operators with  real nonsingular potential. These conditions depend on the topological type of two natural commuting antiholomorphic involutions of $\G^f$. The first one is
\beq\label{invtau}
\tau:  \G^{f}\longmapsto   \G^{f}, \ (w_1, w_2)\longmapsto (\bar w_1, \bar w_2);
\eeq
it reflects the fact that the potential is real. Indeed, if  $\psi$ is a Bloch solution of a Schr\"odinger equation with  real potential, then  $\bar \psi$ is also a Bloch solution of the same equation. Since the points of the Fermi curve parameterize all  Bloch solutions, it follows that, for any $p\in\G^f,$ there is a point $\tau(p)$ such that $\bar \psi(x,y,p)=\psi(x,y,\tau(p))$. The points of the fixed ovals of $\tau$ parameterize the  real Bloch solutions of the Schr\"odinger equation.

The fixed points of the second antiholomorphic involution $\s\tau$ correspond to the pairs of the multipliers such that $|w_i|=1$. Note that
\beq\label{stau}
\s\tau(p)=p \ \ \Rightarrow \ \psi(x,y,\s(p))= \bar \psi(x,y,p).
\eeq
The simple observation, that anti-involution $\s\tau$  of the Fermi-curve corresponding to a Schr\"odinger operator with  nonnegative potential has no fixed points has turned out to be crucial for the proof that such a curve is an  $M$-curve with respect to the anti-involution $\tau$ and that the points of the pole divisor of the Bloch solution, which, together with the curve, uniquely determine the potential, are located  on  the fixed ovals of $\tau$ so that each but one oval contains precisely one such point. A similar description of the spectral data is well known in the theory of the one-dimensional Schr\"odinger operator \cite{dnm}.

For any real potential, the anti-involution $\sigma\tau$ has  only  finitely many  fixed ovals. Under a continuous deformation of the potential this number  changes only if  the curve $\G^f$ becomes singular. More precisely, it changes if  the zero energy level becomes an eigenlevel at some value of the deformation parameter. This explains the special role of the generalized Novikov-Veselov potentials.

\section{Preliminaries}

\medskip
To begin with, we present a description of the spectral Fermi curve and the Bloch-Floque curve for the simplest case of the "free" Schr\"odinger operator $H_0=-\Delta$ at the level  $E\neq 0$. The Bloch solutions of Eq. (\ref{shrod}) with $u=-E$ are parameterized by a nonzero complex parameter $k\in C^*$ and have the form
\beq\label{psifree}
\psi(z,\bar z,k)= e^{kz-k^{-1}\frac{E}{4}\bar z};
\eeq
i.e., in this case, $\G^f$ coincides with $C^*$. The corresponding Floque multipliers are given by the formulas
\beq\label{wfree}
w_1(k)=e^{2\pi(k-k^{-1}\frac{E}{4})\,\ell_1},\ \ w_2(k)=e^{2\pi i(k+k^{-1}\frac{E}{4})\,\ell_2},
\eeq
which define a map
$$ W: C^*\longmapsto  (C^*)^2, \ k\longmapsto (w_1(k),w_2(k)).$$
The image of $W$ is the spectral Bloch-Floque curve $\G_0^{BF}$ of the operator $(H_0-E)$. The holomorphic and antiholomorphic involutions (\ref{inv}) and (\ref{invtau}) in this case have the form
\beq\label{inv2}
\s: k\longmapsto -k,\ \ \tau: k\longmapsto -\frac E{4\bar k}.
\eeq
The only singularities of $\G_0^{BF}$ are the images of the   "resonant" points  $k$ and $k'$  defined by the equations $w_i(k)=w_i(k')$. Such pairs are parameterized by pairs of integers $n, m$  and the sign. They are solutions of the equations
\beq\label{res}
k-\frac{E}{4k}-\left(k'-\frac {E} {4k'}\right)=\frac{ i n} {\ell_1},\ k+\frac E {4k}-\left(k'+\frac E {4k'}\right)=\frac{ m} {\ell_2}.
\eeq
Solving (\ref{res}) we obtain  $k=k_{n,m}^{\pm}$ and $k'=k_{-n,-m}^{\mp},$ where
\beq\label{k-nm}
k_{nm}^{\pm}:=\frac{m \ell_1+i n \ell_2}{4\ell_1\ell_2}\left(1\pm \sqrt {1-\frac{4E\ell_1^2\ell_2^2}{m^2l_1^2+n^2l_2^2}}\,\right).
\eeq
The coordinates of the nodal points of the curve $\G_0^{BF}$ equal
\bea\label{w-nm}
w_1(k_{n,m}^{\pm})=\exp\left(\pi in\pm \frac {\pi m \ell_1} {\ell_2}\sqrt {1-\frac{4E\ell_1^2\ell_2^2}{m^2\ell_1^2+n^2\ell_2^2}}\,\right),\\
w_2(k_{n,m}^{\pm})=\exp\left(\pi im\mp\frac {\pi n\ell_2} {\ell_1}\sqrt {1-\frac{4E\ell_1^2\ell_2^2}{m^2\ell_1^2+n^2\ell_2^2}}\right).
\eea
The values of $E$ for which the  radicand in these  formulas vanishes, i.e.,
\beq\label{E-nm}
E_{|n|,|m|}=\frac{m^2\ell_1^2+n^2\ell_2^2}{4\ell_1^2\ell_2^2},
\eeq
are eigenvalues of the operator $H_0=-\Delta$ on the space of (anti)periodic functions. For such $E,$ the corresponding Bloch-Floque curve has,in addition to an  infinite number of simple nodal points, one self-intersection point of multiplicity $4$. The proof of the existence of the Fermi curve for any periodic Schr\"odinger operator proposed in  \cite{kr-spec} is based on a construction and  convergence analysis of  series defining {\it formal Bloch solutions}.

For any $k_0\in C^*,$ we introduce complex numbers $k_\nu$ as solutions of the equation $w_1(k)=w_{10}:=w_1(k_0),$ where $w_1(k)$ is defined in (\ref{wfree}). The indices  $\nu$ of  $k_\nu$ are pairs  $(n,\pm)$ of an integer and a sign. It is easy to see that
\beq\label{knu}
k_\nu=\frac { in} {2\ell_1}+\frac 12\left(k_0-\frac E {4k_0}\right)\pm \frac{1}{2}\sqrt {\left(\frac{ in} {\ell_1}+
 \left(k_0-\frac E {4k_0}\right)\right)^2+E}.
\eeq
 We set  $\psi_\nu:=\psi(x,y,k_\nu)$ and $w_{2\nu}:=w_2(k_\nu),$ where $\psi$ and $w_2$ are given by formulae (\ref{psifree}) и (\ref{wfree}), respectively.

Under the assumption $w_{20}\ne w_{2\nu}$  for $ \nu\ne 0$ we define formal series
\beq \label{F1}
F(y,k_0)=\sum_{s=1}^\infty F_s(y,k_0),
\eeq
\beq \label{varphi_sum}
\Psi(x,y,k_0)=\sum_{s=0}^\infty \varphi_s,\ \ \ \ \varphi_s=\sum_\nu c_\nu^s(y)\psi_\nu(x,y)
\eeq
by the recurrent formulas
\beq\label{formula51}
F_s=r_0^{-1}\langle \psi_0^+ v\varphi_{s-1}\rangle_x,
\eeq
\beq\label{formula52}
c_0^0=1, c_0^s=-r_0^{-1}\sum_{i=1}^s F_i\langle \psi_0^+\varphi_{s-i}\rangle_x, \ s\ge 1.
\eeq
For $\nu\ne 0,$ we set $ c_\nu^0=0,$ and for $s\ge 1,$ we set
\beq\label{formula6}
c_\nu^s=\frac{w_{2\nu}}{r_\nu (w_{2\nu}-w_{20})}\int_{y}^{y+2\pi \ell_2} \langle\psi_\nu^+\Bigl(-v(x,y)\varphi_{s-1}+\sum_{i=1}^s  2F_i\varphi_{s-i,y}\Bigr)\rangle_x+
\eeq
$$ +\frac{w_{2\nu}}{r_\nu (w_{2\nu}-w_{20})}\int_{y}^{y+2\pi \ell_2} \langle\psi_\nu^+\Bigl(F_{iy}\varphi_{s-i}+\sum_{l=1}^{s-i}F_iF_l\varphi_{s-i-l}\bigr)\Bigr)\rangle_x,
$$
where
\beq\label{rr}
\psi_\nu^+=\psi(x,y,-k_\nu), \ \ r_\nu=r(k_\nu),\ \ r(k):=4\pi i \ell_1\left(k+\frac E{4k}\right),
\eeq
Here and below $\langle f\rangle_x$ denotes the  average over the period of a $2\pi \ell_1$-periodic function $f$ of the variable $x$.
\begin{lem} The formula
\beq
\tilde\psi(x,y,k_0)=\exp \left(\int_0^y F(y',k_0)dy'\right) \Psi(x,y,k_0)\Psi^{-1}(0,0,k_0)
\eeq
defines a formal Bloch solution of the equation
\bea \label{var_eq2}
(-\partial_x^2-\partial_y^2+v(x,y)-E)\tilde \psi=0,
\eea
i.e., a solution of (\ref{var_eq2}) with  the monodromy properties
\beq
\begin{cases}
\tilde \psi(x+2\pi \ell_1,y,k_0)=w_{10}\tilde \psi(x,y,k_0),\\
\tilde \psi(x,y+2\pi \ell_2,k_0)=\tilde w_{20}\tilde \psi(x,y,k_0),
\end{cases}
\eeq
where
\beq
\tilde w_{20}=w_{20}\exp \left(\int_0^{2\pi\ell_2} F(y',k_0)dy'\right).
\eeq
\end{lem}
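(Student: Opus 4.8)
The plan is to substitute the factorized ansatz into (\ref{var_eq2}) and verify, order by order in the formal parameter counting powers of $v$, that everything cancels; all identities are thus understood between formal power series (no convergence is at issue), and the two monodromy claims will fall out of periodicity properties of the building blocks proved along the same induction. \emph{Reduction.} Put $\L_0:=-\partial_x^2-\partial_y^2-E$, so that $\L_0\psi_\nu=0$ for every $\nu$ (each $\psi_\nu=\psi(\cdot,\cdot,k_\nu)$ being a free solution of energy $E$) and (\ref{var_eq2}) reads $(\L_0+v)\tilde\psi=0$. Since $\exp(\int_0^yF)$ depends on $y$ alone — so that $\partial_y$ of it brings out the factor $F$ and $\partial_y^2$ the factor $F_y+F^2$ — a Leibniz computation gives
\[
(\L_0+v)\tilde\psi=\exp\!\Big(\int_0^yF\Big)\Psi^{-1}(0,0,k_0)\big[\L_0\Psi+v\Psi-2F\,\partial_y\Psi-(F_y+F^2)\Psi\big],
\]
so it suffices to prove $\L_0\Psi=-v\Psi+2F\,\partial_y\Psi+(F_y+F^2)\Psi$. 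Writing $\Psi=\sum_{s\ge0}\varphi_s$, $F=\sum_{s\ge1}F_s$ and collecting the terms homogeneous of degree $s$, this amounts to $\L_0\varphi_0=0$ (true since $\varphi_0=\psi_0$) together with $\L_0\varphi_s=G_s$ for each $s\ge1$, where $G_s$ is exactly the bracket appearing in (\ref{formula6}) and depends only on $\varphi_0,\dots,\varphi_{s-1},F_1,\dots,F_s$.

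\emph{Solving the hierarchy.} The point of assembling $\varphi_s=\sum_\nu c_\nu^s(y)\psi_\nu$ out of free solutions is that the $k_\nu$ come in pairs $\{(n,+),(n,-)\}$ — the two roots of a common quadratic in $k$, cf. (\ref{knu}) — whose members have the same $x$-exponent and opposite logarithmic $y$-derivatives $\rho_\nu:=\partial_y\psi_\nu/\psi_\nu$, and that the recursion is tuned so that at every order the variation-of-parameters relation $\sum_\nu c_\nu^{s\prime}(y)\psi_\nu\equiv0$ holds (equivalently $c_{(n,-)}^{s\prime}=-e^{2\rho_{(n,+)}y}c_{(n,+)}^{s\prime}$ for each $n$). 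Granting this, $\L_0\varphi_s=-\sum_\nu\rho_\nu c_\nu^{s\prime}\psi_\nu$, so $\L_0\varphi_s=G_s$ becomes the first-order system obtained by pairing with the dual solutions $\psi_\nu^+=\psi(\cdot,\cdot,-k_\nu)$: since $\langle\psi_\nu^+\psi_\mu\rangle_x$ vanishes unless $\mu$ equals $\nu$ or its partner, the relation collapses these pairings to $r_\nu c_\nu^{s\prime}=-\langle\psi_\nu^+G_s\rangle_x$ for every $\nu$, with $r_\nu$ as in (\ref{rr}). For $\nu\ne0$, where $w_{2\nu}\ne w_{20}$ by hypothesis, this first-order equation together with the twisted periodicity $c_\nu^s(y+2\pi\ell_2)=(w_{20}/w_{2\nu})c_\nu^s(y)$ — forced by requiring $\varphi_s$ to carry $y$-monodromy $w_{20}$ — has a unique solution, and it is precisely (\ref{formula6}) (the factor $w_{2\nu}/(w_{2\nu}-w_{20})$ coming from the quasi-periodic boundary condition); differentiating (\ref{formula6}) recovers $r_\nu c_\nu^{s\prime}=-\langle\psi_\nu^+G_s\rangle_x$ and the variation-of-parameters relation for that pair. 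The resonant index $\nu=0$ is exceptional: its partner $0'$ (with $k_{0'}=-E/(4k_0)$, still $w_{2,0'}\ne w_{20}$) is again governed by (\ref{formula6}), but $c_0^s$ must solve $r_0c_0^{s\prime}=-\langle\psi_0^+G_s\rangle_x$ with $c_0^s$ itself $2\pi\ell_2$-periodic, and the obstruction to that periodicity is exactly what the choice (\ref{formula51}) of $F_s$ removes, after which (\ref{formula52}) furnishes $c_0^s$.

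\emph{The main obstacle.} The genuine work lies in the resonant sector: one has to check that the $c_0^s$ of (\ref{formula52}) solves $r_0c_0^{s\prime}=-\langle\psi_0^+G_s\rangle_x$ (which, given (\ref{formula6}) for $c_{0'}^s$, is the same as the variation-of-parameters relation for the $(0,0')$ pair). Using $\psi_0^+\psi_0\equiv1$ and $r_0=4\pi\ell_1\rho_0$, this unwinds, after substituting (\ref{formula51})--(\ref{formula52}) and simplifying, to the single identity
\[
\sum_{i=1}^{s-1}F_i\,\partial_yQ_{s-i}+2\rho_0\sum_{i=1}^{s-1}F_i\,Q_{s-i}+\sum_{i,l\ge1,\ i+l\le s}F_iF_l\,Q_{s-i-l}=0,\qquad Q_j:=\langle\psi_0^+\varphi_j\rangle_x,
\]
which involves only data of order $<s$ and which I would establish by induction on $s$ from its lower instances together with the already-verified formulas for $c_\nu^j,F_j$ ($j<s$); the cancellation that makes it go through is precisely the normalization of $r_0$ in (\ref{rr}). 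For $s=1,2$ it is a short direct computation (with $F_1=r_0^{-1}\langle\psi_0^+v\psi_0\rangle_x$ essentially the only input). I expect this bookkeeping in the resonant sector to be the crux of the argument.

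\emph{Monodromy.} In the $x$-direction there is nothing to do: $\exp(\int_0^yF)$ and $\Psi^{-1}(0,0,k_0)$ do not depend on $x$, every $\psi_\nu$ carries $x$-monodromy $w_{10}$, and the $c_\nu^s(y)$ are $x$-independent, so $\tilde\psi(x+2\pi\ell_1,y,k_0)=w_{10}\tilde\psi(x,y,k_0)$ termwise. In the $y$-direction I would prove, within the same induction, that $F_s(y,k_0)$ is $2\pi\ell_2$-periodic — by (\ref{formula51}): $\psi_0^+$ has $y$-monodromy $w_{20}^{-1}$, $\varphi_{s-1}$ has $y$-monodromy $w_{20}$ by the inductive hypothesis, and $v$ is periodic, so the integrand there and its $x$-average are periodic — and that $\varphi_s$ has $y$-monodromy $w_{20}$, i.e. $c_\nu^s(y+2\pi\ell_2)=(w_{20}/w_{2\nu})c_\nu^s(y)$: for $\nu\ne0$ this follows from (\ref{formula6}) once one observes that $G_s$, hence $\psi_\nu^+G_s$, has $y$-monodromy $w_{20}/w_{2\nu}$, and for $\nu=0$ it is the periodicity of $c_0^s$ already secured. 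Hence $\Psi$ has $y$-monodromy $w_{20}$; and since $\int_0^{y+2\pi\ell_2}F=\int_0^{2\pi\ell_2}F+\int_0^yF$ by periodicity of $F$, the exponential prefactor upgrades this to $\tilde\psi(x,y+2\pi\ell_2,k_0)=\tilde w_{20}\tilde\psi(x,y,k_0)$ with $\tilde w_{20}=w_{20}\exp(\int_0^{2\pi\ell_2}F)$, as claimed.
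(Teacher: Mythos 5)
The paper itself gives no proof of this lemma --- it is imported verbatim from \cite{kr-spec} --- so there is no in-text argument to compare against. Your route is the standard derivation behind that reference: peel off the $y$-dependent prefactor, reduce (\ref{var_eq2}) to the hierarchy $(-\partial_x^2-\partial_y^2-E)\varphi_s=G_s$, solve it by variation of parameters in $y$ across the pairs $\{k_{(n,+)},k_{(n,-)}\}$ of roots of $p(k)=p(k_0)+in/\ell_1$, recognize (\ref{formula6}) as the unique quasi-periodic solution of the resulting first-order equation for $\nu\neq 0$, and recognize (\ref{formula51})--(\ref{formula52}) as the removal of the secular obstruction in the resonant pair $(0,0')$. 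Your reduction of the whole lemma to the displayed identity in the resonant sector is correct, and the monodromy discussion is complete as written.

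The one genuine gap is that the crux identity is asserted (``I would establish it by induction'') rather than proved, and you do not say what the inductive hypothesis is. In fact the induction closes immediately once the identity is organized correctly: with $R_j:=\partial_yQ_j+2\rho_0Q_j+\sum_{l=1}^{j}F_lQ_{j-l}$, your identity is precisely $\sum_{i=1}^{s-1}F_i\,R_{s-i}=0$, and $R_j=0$ for $1\le j\le s-1$ is nothing other than the order-$j$ statement you have already granted inductively. Indeed, pairing the order-$j$ variation-of-parameters relation for the resonant pair with $\psi_0^+$ gives $\partial_yc_0^{j}+\bigl(\partial_yc_{0'}^{j}\bigr)e^{-2\rho_0y}=0$ (using $\psi_0^+\psi_{0'}=e^{-2\rho_0y}$ and $Q_j=c_0^j+c_{0'}^je^{-2\rho_0y}$), while (\ref{formula52}) gives $2\rho_0c_0^j=-\sum_lF_lQ_{j-l}$; adding these yields $R_j=0$. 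So no separate bookkeeping is needed --- the identity at order $s$ is a weighted sum of the statements already established at orders $<s$. One caveat you should record explicitly: the cancellation of the $i=s$ terms (and the whole of the $s=1$ case) requires $r_0$ to equal the averaged Wronskian $\langle\partial_y\psi_0\,\psi_0^+-\psi_0\,\partial_y\psi_0^+\rangle_x=2i\bigl(k_0+\tfrac{E}{4k_0}\bigr)=2\rho_0$, which differs from (\ref{rr}) by the factor $2\pi\ell_1$; the formulas as printed are mutually consistent only if $\langle\cdot\rangle_x$ denotes the integral over the period rather than the average. This is an inconsistency of the stated conventions, not of your argument, but since your cancellation hinges on ``the normalization of $r_0$'' it is worth flagging.
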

To describe the structure of the Fermi curve, let us fix a positive number $h>0$ and define  neighborhoods $R_{nm}^\pm$ of the resonant points (\ref{k-nm}) so that, for any $k_0\notin R_{nm}^\pm,$ the inequality
\beq\label{nei}
|w_{20}w_{2\nu}^{-1}-1|>h,\ \ \nu\neq 0.
\eeq
holds. Without loss of generality it can be assumed that $h$ is chosen so that the neighborhoods $R_{nm}^\pm$ do not intersect.
\begin{lem}\label{lem:2}\cite{kr-spec}
Suppose that  $v(x,y)$ admits an analytic continuation to some neighborhood of real values of the variables $x$ and $y$. Then there is a constant $N_0$ such that, for any $k_0\notin R_{nm}^\pm$ with $|k_0|+|k_0^{-1}|>N_0,$ the series given by formulas (\ref{F1})-(\ref{formula6}) converge uniformly and absolutely and define  Bloch solutions $\tilde \psi(x,y,k_0)$  of Eq. (\ref{var_eq2}). The function $\tilde \psi(x,y,k_0)$  is an analytic function of the variable  $k_0$
and does not vanish for any $x$ and $y$.
\end{lem}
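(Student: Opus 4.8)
The plan is to establish convergence of the recursive series by a majorant/fixed-point argument in an appropriately weighted Banach space of functions analytic on a strip, exploiting two smallness mechanisms: the decay of the potential's Fourier modes (analyticity) and the large-$|k_0|$ asymptotics of the "small denominators" $r_\nu$ and $(w_{2\nu}-w_{20})$.

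First I would set up the analytic framework. Since $v$ extends analytically to a strip $|\im x|, |\im y| < \rho$, I expand $v$ in its double Fourier series and record exponential decay $|v_{j,l}| \le C e^{-\rho(|j|/\ell_1 + |l|/\ell_2)}$ of its coefficients. I then introduce, for each fixed $k_0$, the Banach space of functions $f(x,y) = \sum_\nu c_\nu(y)\psi_\nu(x,y)$ with norm $\|f\| = \sum_\nu |c_\nu|_{\rho'} \, e^{\kappa |k_\nu - k_0|}$ for a suitable weight $\kappa$ slightly smaller than $2\pi\rho'$ and $|\cdot|_{\rho'}$ a sup-norm on the $y$-strip; the exponential weight in the index $\nu$ is what lets multiplication by $v$ (a discrete convolution against the rapidly decaying $v_{j,l}$) act boundedly, and it is compatible with the structure of (\ref{knu}) because for large $|k_0|$ the roots $k_\nu$ split into two branches, one close to $k_0$ and one close to $-E/(4k_0)$, with the index $n$ controlling the spacing. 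The recursion (\ref{F1})--(\ref{formula6}) should then be recast as a single fixed-point equation $\varphi = \varphi_0 + \mathcal{K}[\varphi]$ where $\mathcal{K}$ is built from the inner products $\langle \psi_\nu^+ v \varphi\rangle_x$, the $y$-integrations, and division by $r_\nu(w_{2\nu}-w_{20})$.

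The crux is the operator-norm estimate $\|\mathcal{K}\| < 1$ for $|k_0| + |k_0^{-1}| > N_0$. Here three bounds combine. (i) From (\ref{rr}), $|r_\nu| = 4\pi\ell_1 |k_\nu + E/(4k_\nu)| \sim 4\pi\ell_1 |k_0|$ on the near branch and is bounded below on the far branch, so $|r_\nu|^{-1} = O(|k_0|^{-1})$ uniformly — this is the primary small parameter. (ii) Outside the excised neighborhoods $R_{nm}^\pm$, the separation inequality (\ref{nei}) gives $|w_{2\nu} - w_{20}|^{-1} \le (h\,|w_{2\nu}|^{-1})^{-1}$ times a controlled factor, and one checks the $w_{2\nu}$ stay in a fixed annulus on the relevant index range so this is $O(1/h)$; the factor $w_{2\nu}/(w_{2\nu}-w_{20})$ in (\ref{formula6}) together with the integral over a full period $2\pi\ell_2$ is exactly the Green's-function-type inverse of the monodromy, and it is bounded. (iii) The $\nu$-sums converge because convolution with $\{v_{j,l}\}$ against the weight $e^{\kappa|k_\nu-k_0|}$ contributes a finite factor $\sum_{j,l}|v_{j,l}| e^{\kappa(\cdots)} < \infty$ provided $\kappa < 2\pi\rho$ (slightly shrinking the strip), and the $F_s$ pieces are handled the same way since $F_s$ is just the $\nu = 0$ projection, which carries the $r_0^{-1} = O(|k_0|^{-1})$ gain as well. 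Multiplying (i)$\times$(ii)$\times$(iii) yields $\|\mathcal{K}\| \le C_v/(|k_0| h) < 1$ once $|k_0|$ is large; the terms quadratic in the $F_i$ in (\ref{formula6}) are even smaller (two factors of $|k_0|^{-1}$) and cause no trouble. Geometric summation of the Neumann series then gives absolute and uniform convergence of $\Psi$ and $F$, hence of $\tilde\psi$ via the previous Lemma, and analyticity in $k_0$ follows because each $\varphi_s$, $F_s$ is manifestly analytic in $k_0$ (the $k_\nu$ depend analytically on $k_0$ away from branch points, which lie inside the $R_{nm}^\pm$) and the convergence is locally uniform, so the limit is analytic by Weierstrass.

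Finally, non-vanishing of $\tilde\psi$: I would write $\tilde\psi = e^{\int F}\,\Psi\,\Psi^{-1}(0,0)$ and note the exponential prefactor never vanishes, so it suffices to show $\Psi = \psi_0(1 + O(|k_0|^{-1}))$ with the correction uniformly small in sup-norm. This is immediate from the fixed-point bound: $\varphi_0 = c_0^0 \psi_0 = \psi_0$ and $\|\Psi - \psi_0\| = \|\mathcal{K}[\Psi]\| \le \|\mathcal{K}\|\,\|\Psi\| \le \tfrac{1}{2}\|\psi_0\|/(1-\|\mathcal{K}\|)$, which combined with the lower bound $|\psi_0(x,y,k_0)| = |e^{k_0 z - k_0^{-1}(E/4)\bar z}|$ staying away from zero on the compact real torus (up to a harmless overall modulus factor) forces $|\tilde\psi| > 0$. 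The main obstacle I anticipate is bookkeeping the two-branch structure of $\{k_\nu\}$ so that the weight $e^{\kappa|k_\nu - k_0|}$ is simultaneously (a) large enough to absorb the Fourier convolution and (b) not so large that the far-branch roots $k_\nu \approx -E/(4k_0) + in/(2\ell_1)$ spoil the geometric summation — this requires choosing $\kappa$ and $N_0$ in the right order and is where the estimate is genuinely delicate rather than routine.
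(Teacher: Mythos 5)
The paper itself does not prove this lemma: it is quoted from \cite{kr-spec}, so there is no in-text argument to compare against. Your sketch reproduces, in outline, the strategy of that reference: a majorant estimate for the recursion (\ref{F1})--(\ref{formula6}) whose small parameter is the product of $|r_\nu|^{-1}=O(|k_0|^{-1})$ (in fact $O((|k_0|+|n|)^{-1})$, since $r(k_\nu)$ grows along \emph{both} branches of roots), the factor $|w_{2\nu}/(w_{2\nu}-w_{20})|<h^{-1}$ guaranteed by (\ref{nei}), and the exponential decay of the Fourier coefficients of $v$ supplied by analyticity. So the plan is sound and identifies all three smallness mechanisms correctly.

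Two pieces of your bookkeeping would, however, fail as written. First, the weight $e^{\kappa|k_\nu-k_0|}$ is taken in the wrong variable. The roots of $w_1(k)=w_{10}$ satisfy $p(k_\nu)-p(k_0)=in_\nu/\ell_1$ exactly, with $p(k)=k-E/(4k)$ as in (\ref{px}), and the average $\langle\psi_\nu^+v\psi_\mu\rangle_x$ is precisely the $(n_\mu-n_\nu)$-th Fourier coefficient of $v$ in $x$; the convolution therefore lives in the integer index $n_\nu$, and the weight must be $e^{\kappa|n_\nu|}$ (equivalently, a weight in $|p(k_\nu)-p(k_0)|$). With $|k_\nu-k_0|$ the weight is of size $e^{\kappa|k_0|}$ on the entire second branch $k_\nu\to 0$, which destroys the estimate; you flag this as delicate, but it is not cured by ordering the choices of $\kappa$ and $N_0$ --- the weight itself must be changed. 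Second, for non-vanishing you need the pointwise bound $\sum_{\nu\neq0}|c_\nu(y)\psi_\nu(x,y)|<|\psi_0(x,y)|$, and $|\psi_\nu/\psi_0|$ involves $e^{-({\rm Im}\,q(k_\nu)-{\rm Im}\,q(k_0))\,y}$ with $q(k)=k+E/(4k)$, which over a period in $y$ is of order $e^{2\pi\ell_2|n_\nu|/\ell_1}$; a norm on the coefficients $|c_\nu|$ alone does not control this, so the Banach norm must be placed on $c_\nu\psi_\nu$ (or carry the extra weight $\sup_y|\psi_\nu/\psi_0|$), after which the Fourier decay of $v$ again absorbs the growth. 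Two smaller points: the hypothesis $|k_0|+|k_0^{-1}|>N_0$ also covers $k_0\to0$, the second end of the Fermi curve, which is treated by the symmetry $k\mapsto-E/(4k)$ exchanging $z$ and $\bar z$; and the recursion is quadratic in the $F_i$ (terms $F_iF_l\varphi_{s-i-l}$ and $F_i\varphi_{s-i,y}$), so "Neumann series" should be a genuine majorant induction on the pair $(\|\varphi_s\|,|F_s|)$ --- your remark that $F_i=O(|k_0|^{-1})$ compensates the factor of $|k_\nu|$ lost to $\partial_y\psi_\nu$ is the right way to close that step.
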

In the resonance case (where the condition $w_{20}\neq w_{2\nu}$ does not hold) we construct formal Bloch solutions as follows. Let $I$  be any finite set of indices $\nu$ such that
\beq\label{condition_02}
w_{2\alpha}\ne w_{2\nu}, \ \alpha\in I, \ \nu\notin I .
\eeq
Then the matrix analogues of the previous formulas define formal {\it quasi-Bloch} solutions of Eq.  (\ref{var_eq2}), i.e., a set of solutions $ \hat\psi_\a(x,y,w_{10})$ of the Schr\"odinger equation with the  monodromy properties
\beq\label{quasi}
\begin{cases}
\hat\psi_\a(x+2\pi\ell_1,y,w_{10})=w_{10}\hat\psi_\a(x,y,w_{10}), \\
\hat\psi_\a(x,y+2\pi \ell_2,w_{10})=\sum_\b T_\a^\b(w_{10})\hat\psi_\b (x,y,w_{10}).
\end{cases}
\eeq
For brevity, we do not present here the corresponding series (see \cite{kr-spec}). For what follows,  it is sufficient to know an explicit form of the first two terms of the series for the matrix $T(w_{10})$:
\beq\label{tseries}
(T_0)_\a^\b=w_{2\b}\delta_\a^\b\,; \ \ \ (T_1)_{\a}^\b=\frac{w_{2\b}}{r_\b}\int_0^{2\pi\ell_2}\langle \psi_\b^+v\psi_\a\rangle_x\,dy'
\eeq
For $k_0\in R_{nm}^\pm$ with $|k_0|+|k_0^{-1}|>N_0,$ as the resonance set of indices we take the pairs $(\nu,\nu_0)$ such that $\nu=0$ and $k_{\nu_0}\in R_{-n,-m}^{\mp}$. As shown in \cite{kr-spec}, the formal series defining the quasi-Bloch solutions uniformly absolutely converge and define functions $\hat \psi_\a(x,y,w_{10})$ which are holomorphic in the variable $w_{10}\in W_{n,m}^{\pm}$. Here $W_{n,m}^{\pm}$ is the domain in the complex plane that is the image of $R_{n,m}^{\pm}$ under the map defined by the function $w_1(k_0)$. Note that   $W_{n,m}^{\pm}$ is simultaneously the image under the same map of the second resonance neighborhood $R_{-n,-m}^{\mp}$.

Consider the matrix $T(w_{10})$  in (\ref{quasi}). It depends holomorphically on  the variable $w_{10}\in W_{n,m}^{\pm}$. In the case under consideration, this  is a $2\times 2$ matrix, and the degree of the discriminant of its characteristic equation
\beq\label{char11}
\det \left( \tilde w_2 \cdot \mathbb{I}-T(w_{10})\right)=0
\eeq
equals 2. In other words, the characteristic equation (\ref{char11}) defines a two-sheeted covering $\G_{n,m}^{\pm}$ of the domain $W_{nm}^{\pm}$ with two branch points inside the domain. We say that a resonance point $k_{nm}^\pm$  is  marked if $\G_{n,m}^{\pm}$ is singular, i.e., the discriminant of the characteristic equation has one double zero.

\begin{lem} For nonmarked resonance points $k_{nm}^\pm,$ the Bloch function $\tilde \psi(x,y,k_0)$ extends analytically to the Riemann surface $\G_{n,m}^{\pm},$ where it has one simple pole. For marked resonance points, the Bloch function $\tilde \psi(x,y,k_0)$ extends analytically from the nonresonance domain to the neighborhoods $R_{n,m}^\pm$ and $R_{-n,-m}^{\mp}$.
\end{lem}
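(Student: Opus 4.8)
The plan is to analyze the local structure of the Bloch function near a resonance point $k_{nm}^\pm$ by reducing everything to the $2\times 2$ monodromy matrix $T(w_{10})$ of the quasi-Bloch solutions $\hat\psi_\alpha$ from (\ref{quasi}), whose convergence on the domain $W_{n,m}^\pm$ is guaranteed by the results of \cite{kr-spec} quoted just above. The characteristic equation (\ref{char11}) cuts out the two-sheeted cover $\Gamma_{n,m}^\pm\to W_{n,m}^\pm$; on this cover there is a well-defined eigenline of $T(w_{10})$ at each point, and the corresponding eigenvector, suitably normalized, provides the analytic continuation of $\tilde\psi(x,y,k_0)$ from the nonresonance domain. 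So the statement splits into two parts according to whether $\Gamma_{n,m}^\pm$ is smooth (nonmarked case) or has a node (marked case).

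First I would treat the nonmarked case. Here the discriminant of (\ref{char11}) has two distinct simple zeros inside $W_{n,m}^\pm$, so $\Gamma_{n,m}^\pm$ is a smooth two-sheeted cover with two simple branch points, and the eigenvalue $\tilde w_2$ is a single-valued holomorphic function on $\Gamma_{n,m}^\pm$. The eigenvector of $T$ is holomorphic away from the locus where the chosen normalizing component vanishes; the key point is to identify where $\tilde\psi$ acquires a pole. Using the first-order data (\ref{tseries}) one checks that, when $w_{10}$ is driven to a resonance value, the eigenvector of the $2\times 2$ block must be written in a basis spanned by $\psi_0$ and $\psi_{\nu_0}$ (the resonant partner with $k_{\nu_0}\in R_{-n,-m}^{\mp}$), and the normalization $\Psi^{-1}(0,0,k_0)$ inherited from Lemma~2.3 forces exactly one simple zero of the normalizing factor on $\Gamma_{n,m}^\pm$, hence one simple pole of $\tilde\psi$. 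Concretely: off-diagonal entries of $T$ are nonvanishing at a nonmarked point (that is precisely what "nonmarked" means once one writes the discriminant as $(\mathrm{tr}\,T)^2-4\det T$ and notes a double zero would require the off-diagonal entries to vanish simultaneously with the diagonal difference), so the eigenvector rotates nontrivially across the two branch points, and on the sheet glued to the nonresonance region the normalization is regular while on the other sheet it has a simple zero. This gives the single simple pole.

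For the marked case, the discriminant has one double zero, $\Gamma_{n,m}^\pm$ is nodal, and its normalization is two disjoint discs $\widetilde R_{n,m}^\pm$ and $\widetilde R_{-n,-m}^{\mp}$ (the two local branches through the node). On each disc the eigenvalue $\tilde w_2$ and a holomorphic eigenvector are single-valued, so $\tilde\psi$ extends holomorphically and without poles onto each of $R_{n,m}^\pm$ and $R_{-n,-m}^{\mp}$ separately — this is exactly the assertion. The check here is that at a marked point the two eigenvalue branches $w_{20}$ and $w_{2\nu_0}$ literally coincide at the resonance value in such a way that no branch cut is needed: the monodromy matrix, after a holomorphic change of basis, becomes diagonal to leading order on each disc, so no pole is generated.

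The main obstacle I anticipate is the bookkeeping in the second, "off-diagonal," term of the series for $T(w_{10})$ beyond the first order (\ref{tseries}): one must verify that the higher terms do not destroy the leading-order picture — i.e. that "nonmarked" genuinely implies the off-diagonal entries of $T$ are nonzero at the relevant point, and that the normalization factor $\Psi(0,0,k_0)$ has a simple rather than higher-order zero. This is a convergence-plus-leading-behavior estimate of the type already carried out in \cite{kr-spec}, so I would invoke those estimates rather than redo them, and focus the argument on the linear-algebra statement that a $2\times 2$ holomorphic family with a simple (resp. double) branch point of its spectral cover yields a meromorphic eigenvector with exactly one simple pole (resp. a holomorphic eigenvector on each local branch).
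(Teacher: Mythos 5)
The paper itself gives no proof of this lemma: it is quoted as a preliminary from \cite{kr-spec}, so your proposal has to be judged on its own merits. Your overall strategy --- reduce everything to the holomorphic $2\times 2$ family $T(w_{10})$, pass to its eigenvectors on the spectral cover $\G_{n,m}^{\pm}$ cut out by (\ref{char11}), and distinguish the smooth and nodal cases --- is exactly the route taken in the cited reference, and the marked case is handled essentially correctly (the normalization of the nodal cover is two discs, the eigenvalue branches are single-valued on each, and the eigenvector family stays regular through the node even when $T$ has a Jordan block there).

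There are, however, two genuine soft spots. First, your parenthetical characterization of marked points is wrong: a double zero of the discriminant $(\operatorname{tr}T)^2-4\det T=(T_{11}-T_{22})^2+4T_{12}T_{21}$ as a function of $w_{10}$ does \emph{not} force the off-diagonal entries to vanish at that point --- e.g.\ $T=\left(\begin{smallmatrix}0&1\\ w^2&0\end{smallmatrix}\right)$ has discriminant $4w^2$ with a double zero at $w=0$ while $T_{12}\equiv 1$ and $T(0)$ is a nontrivial Jordan block. So the inference you draw in the nonmarked case ("nonmarked $\Rightarrow$ off-diagonal entries nonzero, hence the eigenvector rotates nontrivially") rests on a false dichotomy and must be replaced. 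Second, and more importantly, the quantitative heart of the first assertion --- that the normalizing denominator $\Psi(0,0,\cdot)$ (equivalently $\phi(0,0,p)$ for the eigenvector combination $\phi=\sum_\a v_\a\hat\psi_\a$) has \emph{exactly one simple} zero on $\G_{n,m}^{\pm}$ --- is asserted rather than proved. Your picture of "regular on the sheet glued to the nonresonance region, one simple zero on the other sheet" is not well defined: $\G_{n,m}^{\pm}$ is a connected two-sheeted cover of a disc branched at two points, i.e.\ an annulus, and does not decompose globally into two sheets. The correct argument counts the zeros by the argument principle applied to the two boundary circles of this annulus, using the convergent perturbation series to show that on the boundary the denominator is a small perturbation of its unperturbed ($v=0$) value, whose winding numbers are computed explicitly and sum to $1$; this is also what guarantees the zero is simple. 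Without some version of that boundary computation the count "one simple pole" --- which is the entire content of the first half of the lemma, and what makes the genus of $\G^F$ equal the number of poles --- is not established.
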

The extension of the Bloch functions to the "central domain" $R_0$ defined by the inequalities
\beq\label{center}
2\pi\ell_1\left|{\rm Re} \left(k_0-\frac E{4k_0}\right)\right|<r,\ \ 2\pi\ell_1\left|{\rm Im} \left(k_0-\frac E{4k_0}\right)\right|<N
\eeq
 where $N$ is an integer, is described in a similar way
 . The function $w_1(k_0)$ represents  $R_0$ as $2N$-sheeted covering of the  domain $W_0\subset C^*$ defined by the inequalities $e^{-r}<|w_{10}|<e^r$. For any $k_0$ such that $w_1(k_0)\in W_0,$ for the resonance set $I$ of indices we take the set of  those $k_\alpha$ for which   $k_\a\in R_0$.

\begin{lem} Under the assumptions of Lemma \ref{lem:2} there are constants $c_1$ and $c_2$ such that, for $r>c_1$ and $N>c_2,$ the series  defining the quasi-Bloch solutions converge uniformly and absolutely. The Bloch solutions of the Schr\"odinger equation extend from the nonresonance domain to the Riemann surface  $\tilde \G_0$ defined over $W_0$ by the characteristic equation (\ref{char11}) for the corresponding $2N\times 2N$ monodromy matrix. The extension is a meromorphic function on $\tilde \G_0$ with poles independent of $(x,y)$. The number of these poles does not exceed the number of  resonance pairs $k_{n,m}^{\pm}\in W_0$. In the generic case, the surface $\tilde \G_0$ is nonsingular and the number of poles equals the genus of $\tilde \G_0$.
\end{lem}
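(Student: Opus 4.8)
\medskip
\noindent\textbf{Proof strategy.}
The plan is to run inside the central box $R_0$ the same scheme that produced the Bloch solutions in the nonresonance domain (Lemma \ref{lem:2}) and in the resonance strips, the only difference being that the resonance index set $I=\{k_\alpha:k_\alpha\in R_0\}$ now contains roughly $2N$ elements rather than one or two. First I would write down the matrix analogues of the recursion \eqref{F1}--\eqref{formula6}, with $F_s$, $c^s_\nu$ replaced by coefficients indexed by $I\times I$; this is precisely the construction of the quasi-Bloch system \eqref{quasi} indicated after Lemma \ref{lem:2} (and carried out in \cite{kr-spec}). The one new input needed for convergence is a uniform spectral-gap estimate
\[
|w_{2\alpha}w_{2\nu}^{-1}-1|>h,\qquad \alpha\in I,\ \nu\notin I ,
\]
replacing \eqref{nei}; the assertion that suitable $c_1,c_2$ exist is exactly the statement that the box \eqref{center} can be taken large enough (in both its $\mathrm{Re}$- and $\mathrm{Im}$-directions) that the frequencies $k_\nu$ lying outside $R_0$ are separated by a fixed gap from those lying inside, uniformly over all $k_0$ with $w_1(k_0)\in W_0$. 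Granting this, the estimates of \cite{kr-spec} --- bounding the Fourier modes of $v$ by the analyticity hypothesis of Lemma \ref{lem:2} and dominating the recursion by a convergent exponential series --- carry over verbatim with operator norms in place of absolute values, yielding uniform absolute convergence of the series for the $\hat\psi_\alpha$, their holomorphy in $w_{10}\in W_0$, and the fact that they have no zeros. One must check only that the implied constants do not deteriorate as $N$ grows; they do not, since enlarging $N$ merely appends further well-separated nonresonant modes.

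Next I would extract honest Bloch solutions. By \eqref{quasi}, a combination $\psi=\sum_{\alpha}h_\alpha\hat\psi_\alpha$ is a Bloch solution with multipliers $(w_{10},\tilde w_2)$ exactly when $(h_\alpha)$ is an eigenvector of the $2N\times 2N$ monodromy matrix $T(w_{10})$ with eigenvalue $\tilde w_2$. Hence the locus of such pairs over $W_0$ is precisely the $2N$-sheeted branched cover $\tilde\G_0\to W_0$ cut out by the characteristic equation \eqref{char11}, and on $\tilde\G_0$ one gets a tautological Bloch solution by taking $(h_\alpha)$ to be a fixed column of $\mathrm{adj}\bigl(\tilde w_2\,\mathbb{I}-T(w_{10})\bigr)$. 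Since the entries of this adjugate are polynomials in the holomorphic entries of $T$ and in $\tilde w_2$, and the $\hat\psi_\alpha$ are holomorphic and zero-free, the function $\psi(x,y,\cdot)$ so obtained is holomorphic on $\tilde\G_0$; dividing by its value at a fixed point, as in the first Lemma, makes it a meromorphic Bloch function whose poles are the zeros of that normalizing value --- in particular points of $\tilde\G_0$ independent of $x,y$. (If the chosen column of the adjugate vanishes identically along a branch divisor one clears the common factor, again picking up only $(x,y)$-independent poles.) A routine comparison, using that off the resonance set $T$ is a small perturbation of the diagonal matrix $T_0$ of \eqref{tseries}, shows this extension agrees on overlaps with the Bloch solution of Lemma \ref{lem:2}, so it is the desired analytic continuation.

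It remains to count poles. The branch points of $\tilde\G_0\to W_0$ lie over the zeros of the discriminant of \eqref{char11}, i.e.\ where $T(w_{10})$ has a repeated eigenvalue. For $v\equiv0$ these are exactly the images of the resonant pairs $k_{n,m}^{\pm}$ contained in $W_0$, and for analytic $v$ each such coincidence persists as a pair of branch points confined to the corresponding small neighbourhood $W_{n,m}^{\pm}\subset W_0$; outside every resonance neighbourhood $T$ has distinct eigenvalues, the cover is unramified there, and $\psi$ is holomorphic and zero-free by Lemma \ref{lem:2}. Consequently all poles of the normalized $\psi$ are confined to the union of the resonance neighbourhoods inside $W_0$, and inside each of them the analysis reduces to the $2\times2$ case already treated in the earlier resonance lemmas: a nonmarked resonance yields a single pole, a marked one yields none. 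This gives the bound: the number of poles is at most the number of resonant pairs $k_{n,m}^{\pm}\in W_0$. In the generic case every resonance is nonmarked, so every branch point of $\tilde\G_0\to W_0$ is simple and $\tilde\G_0$ is a smooth Riemann surface; the pole count is then exactly the number of resonant pairs in $W_0$, and an Euler-characteristic computation for the $2N$-sheeted cover of the annulus $W_0$ branched over precisely these points identifies that number with the genus of $\tilde\G_0$ (equivalently, Riemann--Roch forces the normalized Bloch function on a smooth genus-$g$ curve to carry exactly $g$ poles).

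The main obstacle is the uniform gap estimate of the first step together with the attendant uniformity in $N$ of the convergence: one has to choose the box \eqref{center}, hence the degree $2N$ of the cover, large enough that the dominating series in the matrix recursion still converges at the same rate even though the number of ``resonant'' modes grows with $N$. Everything after that is bookkeeping --- patching the local eigenvector constructions into a single meromorphic function on $\tilde\G_0$, matching it with Lemma \ref{lem:2} on the overlaps, and the elementary branch-point count.
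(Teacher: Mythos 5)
The paper states this lemma without proof, quoting it from \cite{kr-spec}, and your sketch reconstructs exactly the argument of that reference: the matrix-valued perturbation series over the enlarged resonance set $I=\{k_\alpha\in R_0\}$, convergence from a uniform gap $|w_{2\alpha}w_{2\nu}^{-1}-1|>h$ secured by taking $r$ and $N$ large, Bloch solutions obtained as eigenvectors (adjugate columns) of $T(w_{10})$ on the cover cut out by (\ref{char11}) with poles coming only from the normalization, and the pole count by localization at the resonance pairs. This is the intended approach and is correct at the level of a sketch.
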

The results presented above make it possible  to describe the global structure of $\G^f$. This  is the surface  obtained from the complex plane of the variable  $k_0$ by "pasting in" \,$\tilde \G_0$ instead of $R_0$ "gluing" the surfaces $\G_{nm}^\pm$ instead of the domains $R_{n,m}^\pm$ for the non-marked resonance points.  If the number of  nonmarked resonance points is finite, then $\G^f$ has finite genus and can be compactified by two  points. The corresponding potentials are called algebraic-geometric (or finite-gap) at the zero energy level. As shown in  \cite{kr-spec}, the algebraic-geometric potentials are dense in the space of all periodic potentials.
\begin{rem} By "pasting in" we mean that   the boundaries of the  "excised" and "pasted in"  domains are identified by means of the function $w_1$ and the complex structure on the union of the complement to the excised domain and the pasted in one is defined by the condition that   the holomorphic functions in this structure are those that are  continuous on the gluing line.
\end{rem}

\section{The Fermi curve of the Schr\"odinger operator with  nonnegative potential}

The construction presented in the previous section allows us to consider $\G^F$ as a kind of perturbation of the Fermi curve of the operator $H=-\Delta-E$ with any chosen constant $E$. In \cite{kr-spec}, for definiteness, the constant $E$ was set to  4. Varying this parameter, we can  describe more effectively the structure of the Fermi curve $\G^F$ for the operator $H=-\Delta-E+v(x,y)$ when the periodic function $v$ is small enough, i.e., $|v(x,y)|<\varepsilon$, and the constant $E$ is not an  eigenvalue of $H_0$, i.e., $E\neq E_{|n|,|m|}$. Without loss of generality we will assume that the average of the perturbing potential over the torus equals zero: $\langle\langle v\rangle\rangle =0.$

In this case, the coefficients of the series defining formal Bloch  and quasi-Bloch solutions are bounded by the coefficients of the geometric progression with  exponent  $\varepsilon,$ which automatically guarantees the convergence of the series. Moreover, $\varepsilon$ can be chosen so that   there is no "central" resonance domain in the construction of $\G^F$, i.e.,  $\G^F$ is obtained by pasting in  only the Riemann surfaces $\G_{n,m}^{\pm}$ for nonmarked pairs of the resonant points and, as a consequence, $\G^F$ is  smooth. The description of $\G^F$ admits further effectivization when $E<0,$ which holds in the more general case of nonnegative potentials. We present this description below.

Consider the function
\beq\label{px}
p(k):=\frac{\ln w_1(k)}{2\pi \ell_1}=k-\frac{E}{4k}.
\eeq
The Riemann surface of the inverse function $k(p)$  is a two-sheeted covering of the complex $p$\,-plane with the two branch points
$\pm p_0=\pm \frac 1 2 {\sqrt {-E}}.$ For $E<0,$ we view this surface as being glued from two copies of the $p$-plane  cut along the real axis between the branch points $\pm p_0$. The gluing identifies  the upper (lower) edge of the cut on one sheet  with the lower (upper) edge  of the cut on the second sheet. In this realization the holomorphic involution $\s$ maps $p$ to $-p$ on the same sheet, and the antiholomorphic involution  $\tau$ maps $p$ to $ \bar p$ on the other sheet.

Let $\Pi$ be the set consisting of  a real number $p_0>0$ and pairs of complex numbers $\{p_{s}^j, j=1,2\}$, where $s$ is a finite or infinite set of pairs of integers $(n\neq 0, m)$. We call $\Pi$ {\it admissible} if
$p_{n,m}^1=-p_{-n,-m}^2,
p_{n,m}^2=-p_{-n,-m}^1$,
\beq\label{adm}
{\rm Im}\, p_{s}^{j}=\frac n{2\ell_1}, \ \left|p_{s}^{j}-p(k_{n,m}^\pm) \right|=o\left(\frac 1{n^2+m^2}\right),
\eeq
and the intervals $[p_{s}^1, p_{s}^2]$ parallel to the real axis do not intersect. Note that if the set of indices is finite, then the second set of  conditions in (\ref{adm}) is empty. For an infinite set of indices, these conditions mean that the corresponding pairs are asymptotically localized in neighborhoods of the corresponding resonance points.

For each admissible set $\Pi$ of data we construct a Riemann surface $\G(\Pi)$ from two copies of the complex $p$-plane with cuts between the points $p_0$ and $-p_0$ on both sheets, along the segments $[p_{s}^1, p_{s}^2]$ on the first sheet, and along the segments $[\bar p_{s}^1, \bar p_{s}^2]$ on the second sheet by identifying the upper (lower) edges of the cuts
between $p_0$ and $-p_0$  and along each $[p_s^1,p_s^2]$ on the first sheet with the lower (upper) edges of the cut between $p_0$ and $-p_0$ and along $[\bar p_s^1,\bar p_s^2]$ on the second sheet, and by  identifying upper(lower) side of the cut between $[p_{s}^1, p_{s}^2]$ on the first sheet with the lower (upper) side of the cut between $[\bar p_{s}^1, \bar p_{s}^2]$ on the second sheet,respectively. After gluing each of the cuts corresponds to a nontrivial cycle on the surface $\G(\Pi)$. The respective cycles will be denoted by $a_0$ and $a_s.$
\begin{theo} For any real positive periodic potential  $u(x,y)>0$ that can be analytically extended to a  neighborhood of real $x,y,$ the Bloch solutions of Eq. (\ref{shrod}) can be  parameterized by the points of the Riemann surface $\G(\Pi)$ corresponding to some admissible  data set $\Pi$. The corresponding function $\tilde \psi$ is meromorphic and has one pole on each of the cycles $a_s$.
\end{theo}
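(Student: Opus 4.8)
\emph{Normalization.} The plan is to realize $\G^F$ explicitly via the construction of Section~2 with a suitable negative reference level, and then to read off the claimed structure from the sign condition $u>0$. Since $u>0$ we have $\langle\langle u\rangle\rangle>0$; set $E=-\langle\langle u\rangle\rangle<0$ and $v=u+E$, so $\langle\langle v\rangle\rangle=0$ and $E$ is none of the (positive) eigenvalues $E_{|n|,|m|}$ of $H_0$. In the momentum variable $p=k-E/(4k)$ of (\ref{px}) the free Fermi curve is the two-sheeted cover of the $p$-plane branched at $\pm p_0$, realized as two copies of the $p$-plane cut along $[-p_0,p_0]$ and glued crosswise, with $\s\colon p\mapsto -p$ on the same sheet and $\tau\colon p\mapsto\bar p$ interchanging the sheets (cf.\ (\ref{inv2})); the full curve $\G^F$ is obtained from it, as recalled above, by pasting in a two-sheeted piece $\G_{nm}^\pm$ at each nonmarked resonance pair and, a priori, a central surface $\tilde\G_0$, and a resonance falling in the central region is treated on the same footing as any other.

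\emph{Positivity and the slit picture.} For $u>0$ every operator $(i\p_x-a)^2+(i\p_y-b)^2+u$, $a,b\in\bbR$, is strictly positive on $L^2$ of the torus. Hence $0$ is not an (anti)periodic eigenvalue, so the involution $\s$ has no fixed points and $\G^F$ is smooth; moreover a fixed point of $\s\tau$ would be a Bloch solution with $|w_1|=|w_2|=1$ which, after removing the unimodular multipliers by a constant magnetic gauge, becomes a nonzero periodic solution of a positive operator of the above type at eigenvalue $0$ — impossible. So $\s\tau$ is fixed-point-free while $\tau$ still has its ovals (the real Bloch solutions with $|k|$ large), exactly the situation of (\ref{stau}). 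Each pasted piece is, after perturbation, a handle joining sheet~$1$ near a point of height $\im p=n/(2\ell_1)$ to sheet~$2$ near the complex-conjugate point; since $w_1=e^{2\pi\ell_1 p}$ must remain single-valued across this handle while the two sheets carry conjugate values of $p$, the handle can only be cut along a slit $[p_s^1,p_s^2]$ with $\im p_s^1=\im p_s^2=n/(2\ell_1)$, i.e.\ parallel to the real axis; analyticity of $v$ (fast decay of its Fourier coefficients) then forces $|p_s^j-p(k_{nm}^\pm)|=o((n^2+m^2)^{-1})$ and disjointness of the slits, which are precisely the admissibility conditions (\ref{adm}), a marked resonance producing a degenerate slit $p_s^1=p_s^2$. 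Assembling all slits with the central slit $[-p_0,p_0]$ identifies $\G^F$ with $\G(\Pi)$ for an admissible $\Pi=\{p_0\}\cup\{p_s^j\}$; that it is moreover an $M$-curve with $\tau$-ovals $a_0,a_s$ following the slits follows by deforming $u$ to the constant $M\gg1$ along the positive potentials $u_t=tu+(1-t)M$, where $\G^F$ stays smooth, so the real topological type stays that of the genus-zero free curve, which is maximal.

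\emph{The Bloch function.} By Lemma~\ref{lem:2}, $\tilde\psi$ is analytic and nowhere vanishing on the non-resonant part of $\G(\Pi)$; by the resonance Lemmas it extends meromorphically across each handle, i.e.\ across each cycle $a_s$, with exactly one simple pole inherited from the corresponding $\G_{nm}^\pm$, and it stays holomorphic across the central cycle $a_0$, which belongs to the unperturbed branching. Hence the pole divisor consists of exactly one point on each $a_s$. Consistently, the reality $\bar\psi(x,y,p)=\psi(x,y,\tau(p))$ makes this divisor $\tau$-invariant, hence supported on the fixed ovals; the non-vanishing of $\tilde\psi$ rules out further poles; and the degree of the divisor equals the genus of $\G(\Pi)$, as it must for a Baker--Akhiezer function with the prescribed essential singularities at $P_\pm$.

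\emph{Main difficulty.} The crux is controlling the "central region": for small $v$ it is trivially just the slit $[-p_0,p_0]$ (no resonances there), but for an arbitrary positive $u$ one must show that the central surface $\tilde\G_0$ still decomposes into that slit together with ordinary horizontal slits, with no rogue branch points and no pole stranded on $a_0$. The natural device is the deformation $u_t=tu+(1-t)M$, which keeps $\G^F$ smooth; carrying the continuity-of-real-type argument through rigorously in the non-compact, infinite-genus setting — in particular excluding any escape of handles or poles to infinity as $t$ varies — is the technical heart of the proof.
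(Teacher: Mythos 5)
Your overall strategy --- reduce to a small perturbation of $-\Delta-E$ with $E=-\langle\langle u\rangle\rangle<0$, then propagate the structure to general positive $u$ by a deformation in which positivity forbids fixed points of $\s\tau$ --- is the same as the paper's, and several pieces match: the normalization, the fixed-point-freeness of $\s\tau$ (your gauge argument is equivalent to the paper's integration by parts), and the final step that a single $\tau$-invariant pole near each handle must lie on the fixed oval $a_s$. But there is a genuine gap at the central step: why does each resonance open into a fixed oval of $\tau$, i.e.\ why do the two branch points of $\G_{nm}^{\pm}$ lie over \emph{real} values of $w_{10}$ (equivalently, why is the slit horizontal, ${\rm Im\,}p_s^j=n/(2\ell_1)$)? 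Your argument --- single-valuedness of $w_1$ across the handle --- does not deliver this: $w_1$ is single-valued on $\G^F$ for any periodic potential, and $\tau$-invariance of the curve by itself is equally compatible with the two branch points forming a complex-conjugate pair off that line; this is exactly what happens when the reference level lies in the spectrum of $H_0$, where one gets fixed ovals of $\s\tau$ instead of $\tau$. The paper's proof of horizontality is a computation: for $E<0$ the values $w_2(k_{nm}^{\pm})$ are real and the conjugation relations (\ref{conj}) hold, so the $2\times 2$ monodromy matrix has the form (\ref{T}), Hermitian up to $O(\e^2)$; its eigenvalues are therefore real and \emph{distinct} at first order, and by $\tau$-invariance they remain real to all orders, which forces both branch points onto the real $w_{10}$-axis and opens a genuine ``forbidden zone.'' Nothing in your write-up substitutes for this.

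Two further points. First, the base point of your deformation $u_t=tu+(1-t)M$ is degenerate: at $t=0$ the curve is the free one with all handles pinched (genus zero), so ``the real topological type stays that of the free curve'' cannot be read off there; the handles open for $t>0$ and you must determine \emph{how} they open, which is again the missing perturbative computation. Second, ``$\G^F$ stays smooth along the deformation'' needs more than the absence of fixed points of $\s$ and $\s\tau$: one must also exclude the merging of distinct cycles $a_s$, which the paper rules out by the periodicity argument of Theorem 2.2 of \cite{kr-spec}. You correctly flag the control of the central region as a remaining difficulty, but the items above are gaps even away from it.
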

\begin{proof} For $E<0,$ the coordinates of the nodal points $w_i(k_{nm}^\pm)$ are real. As  mentioned above, for sufficiently small $\varepsilon,$ the set of resonant indices contains only two elements, which can be identified with $(n,m,\pm)$ and  $(-n,-m,\mp)$. It can be checked directly that
\beq\label{conj}
\psi(k_{n,m}^\pm)=\overline {\psi(k_{-n,-m}^\mp)},\ \psi^+(k_{n,m}^\pm)=\overline {\psi(k_{-n,-m}^\pm)}, r_{n,m}^{\pm}=\overline{r_{-n,-m}^{\mp}}
\eeq
where $r_{n,m}^{\pm}:=r(k_{n,m}^\pm)$, and the function $r(k)$ is defined in (\ref{rr}).
It follows from (\ref{conj}) and (\ref{tseries}) for $w_{10}=w_1(k_{n,m,}^\pm)$ that the monodromy matrix has the form
\beq\label{T}
T(w_1(k_{nm}^\pm))=w_2(k_{nm}^\pm)\left(\begin{array}{cc}
1    & \kappa      \\
\bar \kappa    & 1
\end{array}\right)+O(\varepsilon^2),
\eeq
where $\kappa=(r_{-n,-m}^{\mp})^{-1}\langle\langle\psi^+(k_{-n,-m}^\mp) v \psi_(k_{n,m}^{\pm}\rangle\rangle$. Note that  the diagonal elements of the monodromy matrix equal $1$ due to the assumption that the the average of $v$ equals zero.

From (\ref{T}) it follows that the eigenvalues of the matrix $T(w_1(k_{nm}^\pm))$ are real and distinct in the first order in the parameter $\varepsilon$. Therefore, since  because the Fermi curve is invariant under $\tau$, they should be real to all orders in $\varepsilon$. In other words, in a neighborhood of each nonmarked pair of the resonance points there is a "forbidden zone," that is, a fixed oval of the anti-involution $\tau$. Since the pole divisor of the Bloch function is invariant under $\tau$ and there is only one pole  in a neighborhood of the resonance pair, it follows that this pole must lie on the oval $a_s$. This proves the theorem for potentials  of the form  $u=-E+v$ for $E<0$ and sufficiently small $v$.

\smallskip

\noindent{\it Change of notation.} So far we denote the Bloch solutions of Eq. (\ref{shrod}) constructed with the help of perturbation theory by $\tilde \psi$. In the rest of the paper we will denote them by $\psi(x,y,p), p\in \G^F$. Similarly, the Floquet multipliers will be denoted by $w_i(p)$.

\smallskip
Our next goal is to prove that the established properties of the Fermi curve and the pole divisor of the Bloch function are stable with respect to a deformation of the potential  $u$ under which $u$ remains nonnegative. From the invariance of $\G^F$  with respect to $\s$ and $\tau$ it follows that the described structure can change only if $(i)$ the cycles $a_s$ for distinct $s$ touch each other (at this moment the curve $\G^F$ becomes singular) or  $(ii)$ there appears a pair  of resonance points $p,p',$ on $\G^f$  such that $w_i(p)=w_i(p')$ and these points are fixed under the involution $\s\tau$, i.e., $|w_i(p)|=1$.

The same argument as in the proof of Theorem 2.2 in \cite{kr-spec} proves that the periodicity of the potential
(reflected in the fact that on $\G^F$ the functions $w_i(p)$) are defined) is an obstruction to the merging of cycles $a_s$. An obstruction to the emergence of singularities of type $(ii)$ is the nonnegativity of $u$.

\begin{lem} The anti-involution $\s\tau$ of the Fermi curve $\G^f$ corresponding to a nonnegative potential has no fixed points.
\end{lem}
\bpf Suppose that there is a point on $\G^F$ that is fixed under the anti-involution $\s\tau(p)=p$. By definition $\psi=\psi(x,y,p)$ is a Bloch solution of Eq. (\ref{shrod}). Let us multiply the left--hand side of the equation by the dual Bloch function $\psi^\sigma$ and integrate  the resulting periodic function of $x$ and $y$ over the torus (it is periodic because the Floque multipliers of the function $\psi^\s$ are inverse to those of $\psi$). Integration by parts gives the equation
\beq\label{T2av}
\int_0^{2\pi\ell_1}\int^{2\pi \ell_2}_0 \left(\partial_x\psi\partial_x\psi^\s+\partial_y\psi\partial_y\psi^\s+u\psi\psi^\s\right)dxdy=0.
\eeq
From (\ref{stau}) it follows that $\psi^\s=\bar \psi$. Therefore, if $u\geq 0,$ then the left-hand side is strictly positive. The obtained contradiction completes the proof of the theorem.
\end{proof}
\begin{rem} Note that the simplicity of the description of the Fermi curve as the curve corresponding to some admissible data set is misleading to some extend. By construction the function $p$ is  multivalued on $\G(\Pi)$, but its multivaluedness reduces to adding an integer multiple of $i/\ell_1;$ therefore, the function $w_1(p)=\exp(2\pi \ell_1 p)$ is well defined on $\G(\Pi)$. Hence  the curves of the form $\G(\Pi)$ for any admissible set of data correspond to potentials periodic in the variable $x$. Periodicity in  $y$, which is equivalent to the existence on $\G(\Pi)$ of the second function $w_2,$ dives additional transcendental equations for admissible sets of data. Nevertheless, the description of the Fermi curves in terms of admissible data turns out to be effective enough for the study of perturbed curves.
\end{rem}
\section{The generalized Novikov-Veselov construction}

The goal of this section is to construct Schr\"odinger operators which are finite-gap (algebraic-geometric) at the zero energy level in the case where this zero level is an eigenlevel.

Let  $\G$ be a smooth algebraic curve with an involution $\sigma$  having $n+1$ pairs of fixed points
$P_\pm, p^i_\pm,\ i=1,\ldots,n$. The curve  $\G$ is a two-sheeted covering of the factor curve  $\G_0:=\G/\sigma$ branched at the fixed points. If $\G_0$ is of genus $g_0,$ then by the Riemann-Hurwitz formula $\G$ has genus $g=2g_0+n$. Below it is assumed that neighborhoods of the marked points $P_\pm$ are endowed with fixed local coordinates $k_\pm^{-1}$ which are odd with respect to the involution, i.e., $k_\pm(p)=-k_\pm(\s(p)).$

We say that a divisor $D=\g_1+\cdots+\g_{g+n}$ is admissible if the set of points $\g_s$ and $\gamma^\sigma_s$ is the set of zeros of some meromorphic differential $d\Omega$ having simple poles at the fixed points of the involution and residues satisfying the equations
\beq\label{omegares}\res_{P_{\pm}}d\Omega=\pm 1,\ \ \res_{p^i_+}d\Omega=-\res_{p^i_-}d\Omega.
\eeq
\begin{lem} \label{lemBA} For a generic  admissible divisor $D,$ there is a unique Baker-Akhiezer function  $\psi(x,y,p), p\in \G,$ such that

(i) $\psi$ is meromorphic on $\G\setminus P_\pm$ and has at most simple poles at the points $\g_s$ (if they are distinct);

(ii)  in a neighborhood of the points  $P_\pm$ the function $\psi$ has the form
\beq\label{ii}
\psi=e^{k_\pm(x\pm iy)}\left(\sum_{s=0}^\infty \xi_s^{\pm}(x,y) k_\pm^{-s}\right), \ \ k_\pm=k_{\pm}(p);\eeq

(iii) its values at the points $p_\pm^i$ satisfy the equations
\beq\label{nod}
\psi(x,y,p^i_+)=\psi(x,y,p_-^i);
\eeq

(iv) the coefficients $\xi_0^{\pm}$ in (\ref{ii}) equal
\beq\label{xio}
\xi_0^+=1, \ \ \xi_0^-=1.
\eeq
\end{lem}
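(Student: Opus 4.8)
The plan is to follow the standard Baker--Akhiezer existence/uniqueness argument, adapted to the constraints (iii) and (iv), and to show that the new conditions exactly balance the count of parameters. First I would recall that the data $\{\G, P_\pm, k_\pm, D\}$ with $\deg D = g+n$ is more than the usual $\deg D = g$ required for a two-point Baker--Akhiezer function with exponential singularities $e^{k_\pm(x\pm iy)}$ at $P_\pm$. By Riemann--Roch, for a generic effective divisor $D$ of degree $g+n$ the space of functions that are meromorphic on $\G\setminus\{P_+,P_-\}$ with poles bounded by $D$ away from $P_\pm$ and with the prescribed essential singularities at $P_\pm$ has dimension $(g+n)-g+1 = n+1$ (here one trivializes the essential singularity by the standard device: $\psi e^{-k_\pm(x\pm iy)}$ extends holomorphically near $P_\pm$, and one works on the affine curve). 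So the relevant linear space $\mathcal{L}(D)$ of candidate functions is $(n+1)$-dimensional for generic $D$, depending on $(x,y)$ as parameters.

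Next I would impose the linear constraints. Condition (iv), $\xi_0^+ = \xi_0^- = 1$, is a normalization: first fix $\xi_0^+ = 1$, which is one linear condition (it cuts the space down to an affine subspace, i.e. pins the scaling), leaving an $n$-dimensional affine family; equivalently, the homogeneous space of functions with $\xi_0^+ = 0$ is $n$-dimensional. Now condition (iii) is $n$ linear equations $\psi(x,y,p^i_+) - \psi(x,y,p^i_-) = 0$, $i=1,\dots,n$, and condition (iv) still contains the remaining equation $\xi_0^- = 1$. Counting: starting from the $(n+1)$-dimensional space $\mathcal{L}(D)$, the $n$ gluing conditions (iii) together with $\xi_0^+=1$ would be $n+1$ conditions, generically leaving a $0$-dimensional (single-point) solution set; one must then check that for this solution the remaining normalization $\xi_0^- = 1$ holds automatically, or reorganize the bookkeeping so the count comes out to exactly one function. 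The clean way is: the homogeneous system ``$\psi\in\mathcal{L}(D)$, $\xi_0^+=0$, $\psi(p^i_+)=\psi(p^i_-)$ for all $i$'' is $(n+1)$ conditions on an $(n+1)$-dimensional space, hence has only the zero solution for generic $D$; this gives uniqueness. For existence one shows the $n+1$ linear functionals are independent, so the inhomogeneous system (with $\xi_0^+=1$ on the right) has a solution, and then one must verify $\xi_0^-\neq 0$ so that rescaling achieves (iv); the admissibility hypothesis on $D$ (that $D + D^\sigma$ is the zero divisor of a differential $d\Omega$ with the prescribed residues at the fixed points) is what guarantees this non-degeneracy and the genericity.

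The role of admissibility deserves emphasis, and I expect it to be the subtle point. The differential $d\Omega$ with $\res_{P_\pm} d\Omega = \pm1$ and $\res_{p^i_+} d\Omega = -\res_{p^i_-} d\Omega$ is $\sigma$-anti-invariant (up to sign) and its zero divisor being $D + D^\sigma$ is precisely the compatibility condition ensuring that the Baker--Akhiezer function, once constructed, descends to a well-behaved object on the Prym variety and that the linear-algebra count above is non-degenerate. Concretely, the product $\psi(x,y,p)\,\psi(x,y,\sigma(p))\,d\Omega(p)$ is a meromorphic differential whose only poles are double poles at $P_\pm$ (from the two exponentials, which at $P_\pm$ combine because $k_\pm$ is odd) plus possible poles at $\gamma_s + \gamma_s^\sigma$ which are cancelled by the zeros of $d\Omega$; the sum of its residues being zero is the identity that forces the normalization and the gluing to be consistent. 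So the key steps in order are: (1) set up $\mathcal{L}(D)$ via Riemann--Roch and compute $\dim = n+1$ for generic $D$; (2) show the $n+1$ homogeneous linear functionals ($\xi_0^+$ and the $n$ differences $\psi(p^i_+)-\psi(p^i_-)$) are linearly independent for generic admissible $D$, giving uniqueness; (3) deduce existence from independence, and use the residue identity coming from $d\Omega$ to check $\xi_0^-\neq0$ and hence normalize to (iv).

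The main obstacle will be step (2)/(3): proving the linear independence of those functionals — equivalently, that the homogeneous problem has no nonzero solution — for generic admissible $D$. One cannot just invoke Riemann--Roch blindly because the constraint that $D$ be admissible is not an open dense condition in the full $\mathrm{Sym}^{g+n}\G$; it cuts out a subvariety (the Prym-type locus), so ``generic'' must mean generic within that admissible locus, and one has to verify the count is still transverse there. I would handle this by exhibiting the tangent/cotangent identification: the obstruction to solvability lives in a space dual to sections of a line bundle of the form $K_\G \otimes \mathcal{O}(-P_+-P_--\sum p^i_\pm + D)$, and admissibility says exactly that this bundle is $\mathcal{O}(D^\sigma)$-related, so its relevant cohomology vanishes generically; alternatively one can degenerate to the $n=0$ Novikov--Veselov case (where the statement is classical, \cite{nv1}) and argue by deformation/continuity in $n$. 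Either route requires care, and that is where the real work lies; the rest is the routine Baker--Akhiezer bookkeeping.
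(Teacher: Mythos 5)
Your overall strategy coincides with the paper's: the space of functions satisfying (i) and (ii) has dimension $n+1$ (the classical count from \cite{kr77} for a divisor of degree $g+n$), the $n$ gluing conditions (iii) together with $\xi_0^+=1$ pin down $\psi$ for a generic admissible $D$, and the remaining condition in (iv) is to be extracted from the residue theorem applied to $\psi\psi^\s d\Omega$. That skeleton is exactly the paper's argument, and your emphasis on where admissibility enters is correctly placed.

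The endgame for (iv), however, would fail as you state it. You propose to ``verify $\xi_0^-\neq0$ so that rescaling achieves (iv)''; but once $\xi_0^+=1$ is imposed there is no rescaling left --- multiplying $\psi$ by a constant destroys $\xi_0^+=1$. What the residue identity actually yields is stronger and is exactly what is needed. The differential $d\Omega_1=\psi\psi^\s d\Omega$ is meromorphic on all of $\G$ (the exponentials cancel because $k_\pm$ is odd under $\s$), and it has \emph{simple}, not double, poles at $P_\pm$, with residues $(\xi_0^+)^2=1$ and $-(\xi_0^-)^2$; it also has simple poles at the points $p_\pm^i$, which you omitted from your list of singularities. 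The residues at each pair $p_\pm^i$ cancel because of (iii) combined with $\res_{p_+^i}d\Omega=-\res_{p_-^i}d\Omega$ (this is precisely where conditions (iii) and (\ref{omegares}) are used together). The residue theorem then gives $1-(\xi_0^-)^2=0$, i.e.\ $\xi_0^-=\pm1$; the sign is fixed by observing that $\xi_0^-$ is a meromorphic function of $(x,y)$ taking only the values $\pm1$, hence constant, and at $x=y=0$ one has $\psi\equiv1$, so $\xi_0^-=1$. Finally, the transversality worries you raise about genericity inside the admissible locus are legitimate but are not where the content of the lemma lies; the paper disposes of them with a general-position statement, and the cohomological machinery you sketch is not needed.
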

\noindent
\bpf
According to \cite{kr77}, the vector space of functions that satisfy the first two conditions $(i)$ and $(ii)$ is of dimension $n+1$. Hence in the general position the $n$ linear equations in $(iii)$ and the normalization $\xi_0^+=1$ of the coefficient uniquely determine $\psi$. The proof that the second equation in (\ref{xio}) holds in the case is similar to that in \cite{nv1}.

Indeed, consider the differential $d\Omega_1=\psi\psi^\s d\Omega,$ where $d\Omega$ is a meromorphic differential whose zeros are the poles of the functions $\psi$ and $\psi^\s,$ so that $d\Omega_1$ has poles  only at the points $p^i_\pm$ outside the marked points $P_\pm.$  By assumption the local coordinates $k^{-1}_\pm$  are odd with respect to  $\s$. Therefore, the exponential singularities of  $\psi$ and $\psi^\s$ are canceled in the differential $d\Omega_1$. Hence the differential $d\Omega_1$ is meromorphic on $\Gamma$. The sum of all residues of a meromorphic differential equals 0. It follows from (\ref{omegares}), (\ref{nod})  that the sum of residues of $d\Omega_1$ at each pair of points $p_\pm^i$ equals  $0$. Hence
$$0=\res_{P_+}\psi\psi^\sigma d\Omega+\res_{P_-}\psi\psi^\sigma d\Omega=1-(\xi_0^-)^2.$$
A priory $\xi_0^-$ is a meromorphic function of its arguments. Therefore, it identically equals $1$ or $-1$.
For $x=y=0,$ the function$\psi$ equals $1$ identically in $p$. Hence $\xi_0^-(0,0)=1,$ which implies (\ref{xio}).

The analytical properties of $\psi$ are similar to those used in the construction of the so-called\, "multisoliton solutions on the finite-gap background" of the KdV equation (see \cite{kr-mult}). This makes it possible to obtain an explicit, although cumbersome,  expression for $\psi$ in terms of the determinant of an $n\times n$ matrix whose entries are expressed in terms of the Riemann theta-function. Our next goal is to show that, for any $n,$ the Baker-Akhiezer function admits a simple explicit expression in terms of appropriately defined Prym theta-functions.

It is known that on  $\G$ there is a basis of $a$- and $b$-cycles  with canonical intersection matrix: $a_i\cdot a_j=b_i\cdot b_j=0, a_i\cdot b_j=\delta_{ij};$ moreover, in this basis the action of the involution $\s$ has the form
\beq\label{sa}\sigma(a_i)=a_{i+g_0}, \ \ \sigma(b_i)=b_{i+g_0}, \ i=1,\ldots, g_0,
\eeq
and
\beq\label{sa1}\sigma(a_i)=-a_i, \ \ \sigma(b_i)=-b_i, \ i=2g_0+1, \ldots, 2g_0+n.
\eeq

Let $d\omega_i$ be a normalized basis of holomorphic differentials on  $\G$\ (i.e., $\oint_{a_j} d\omega_i=\delta_{ij}$). We introduce the following  holomorphic Prym differentials odd with respect to $\sigma$:
\beq du_i=d\omega_i-d\omega_{i+g_0},  \  i=1,\ldots, g_0,
\eeq
$$ du_i=2d\omega_{i}, i=g_0+1, \ldots, g_0+n.$$
Let $\Pi$ denote the matrix of their  $b$-periods,
$ \Pi_{i,j}=\oint_{b_i}du_j,$ which defines the corresponding Prym theta-function
\beq\label{theta}
\theta(z|\Pi):=\sum_{m\in \mathbb Z^{g+n}} e^{2\pi i(z,m)+\pi i (m,\Pi m)}.
\eeq
By $A(p)$ we denote the vector (depending on the choice of a path) with coordinates  $A_i(p)=\int_{P_+}^p du_i$ and by $\Omega_\pm(p),$ the Abelian integrals $\Omega_{\pm}=\int_{P_+}^p d\Omega_{\pm},$ where $d\Omega_+(d\Omega_-)$ is the normalized (i.e., having zero $a$-periods) meromorphic differential which has only one  singularity (at $P_+(P_-)$) and is  of the form $d\Omega_\pm =dk_\pm(1+O(k_\pm^{-2}))$.
The definition of $\Omega_+$ needs clarification, since $d\Omega_+$ has a pole at $P_+$. By the integral  $d\Omega_+$ of the point $P_+$ we mean the choice of the  branch $\Omega_+=k_++O(k_+^{-1})$ in a neighborhood of $P_+$ and the analytic continuation  along the path. In what follows, it is assumed that the paths in the definition of $A(p)$ and $\Omega_\pm(p)$ are the same.

\begin{lem} \label{lemma3.2} The Baker-Akhiezer function in Lemma \ref{lemBA} equals
\beq \label{B-A}\psi(x,y,p)=
\frac{\theta(A(p)+zU_++\bar zU_-+Z|\Pi)\theta(Z|\Pi)}{\theta(zU_++\bar z U_-+Z|\Pi)\theta(A(p)+Z|\Pi)}
e^{z\Omega_+(p)+\bar z\Omega_-(p)},
\eeq
where  $U_+$ and $U_-$ are the vectors with coordinates
\beq\label{U}
U_\pm^j=\frac{1}{2\pi i}\oint_{b_j}d\Omega_\pm
\eeq
and
\beq\label{Z}
Z=-\sum_s A(\g_s)+{\mathcal K},
\eeq
where ${\mathcal K}$ is a constant vector.
\end{lem}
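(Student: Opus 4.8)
The plan is to verify that the right-hand side of~(\ref{B-A}) has exactly the analytic properties listed in Lemma~\ref{lemBA}, and then invoke the uniqueness asserted there. First I would check that formula~(\ref{B-A}) defines a single-valued meromorphic function on $\G$. Under a shift of the path of integration by an $a$- or $b$-cycle the vectors $A(p)$ and the integrals $\Omega_\pm(p)$ pick up standard increments; the quasi-periodicity law of the Prym theta-function~(\ref{theta}) together with the bilinear Riemann relations expressing the $b$-periods of $d\Omega_\pm$ through the vectors $U_\pm$ in~(\ref{U}) makes the theta-factor transform by precisely the exponential factor needed to cancel the monodromy of $e^{z\Omega_+(p)+\bar z\Omega_-(p)}$. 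This is the routine ``Its-type'' computation and I would only sketch it. The zeros of the denominator $\theta(zU_++\bar zU_-+Z|\Pi)$ along $\G$ are, by the Riemann vanishing theorem and the choice~(\ref{Z}) of $Z$ with ${\mathcal K}$ the appropriate Riemann constant for the Prym data, exactly the points $\g_1,\dots,\g_{g+n}$, which produces the simple poles in condition~(i); the zeros of $\theta(A(p)+Z|\Pi)$ in the numerator are the same points and cancel them, so no spurious poles appear.

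Next I would examine the behaviour near $P_\pm$. Near $P_+$ one has $A(p)\to 0$ and $\Omega_+(p)=k_++O(k_+^{-1})$, $\Omega_-(p)=O(k_+^{-1})$, so the prefactor tends to $1$ and $e^{z\Omega_++\bar z\Omega_-}=e^{k_+(x+iy)}(1+O(k_+^{-1}))$, giving the expansion~(\ref{ii}) at $P_+$ with $\xi_0^+=1$, which is~(\ref{xio}). Near $P_-$ one uses that $du_i$ is odd under $\s$ so $A(P_-)$ is a half-period (a point of order two in the Prym variety), and that $d\Omega_+$ has $\Omega_+(p)=O(k_-^{-1})$ while $\Omega_-(p)=k_-+O(k_-^{-1})$ there; the resulting leading coefficient $\xi_0^-$ is then a $(z,\bar z)$-independent constant, which by the residue argument already given in Lemma~\ref{lemBA} equals $\pm1$ and, being $1$ at $x=y=0$, equals $1$. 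Thus~(\ref{ii}) and~(\ref{xio}) hold. Checking~(\ref{ii}) at $P_-$ is where the oddness of the local coordinates and of the Prym differentials must be used carefully, so I would spell out that step.

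Finally I would verify the gluing conditions~(\ref{nod}). The key is that $du_i$, $d\Omega_+$ and $d\Omega_-$ are all odd under $\s$, hence $A(p^i_+)=-A(p^i_-)$ modulo the lattice, $\Omega_\pm(p^i_+)=-\Omega_\pm(p^i_-)$, and the admissibility of the divisor $D$ (i.e. $D+D^\s$ being the zero divisor of $d\Omega$) forces $Z$ to satisfy a symmetry under which $A(\g_s)\mapsto -A(\g_s)-A(\text{something})$; combining these, the value of the right-hand side of~(\ref{B-A}) at $p^i_+$ is taken by the Prym theta-function to its value at $p^i_-$. Concretely, $\psi(x,y,p^i_+)/\psi(x,y,p^i_-)$ reduces, after the odd-symmetry substitutions, to a ratio of Prym theta-values that is independent of $(x,y)$, and evaluating at $x=y=0$ (where $\psi\equiv 1$) shows this ratio is $1$, giving~(\ref{nod}). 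The main obstacle is precisely this last point: one must pin down the exact symmetry of the constant vector $Z$ and of the Riemann constant ${\mathcal K}$ under the involution induced by $\s$ on the Prym variety, and confirm that it is compatible with the pairing of fixed points $p^i_+\leftrightarrow p^i_-$ rather than introducing an extra half-period shift; the normalization~(\ref{xio}) and the evaluation at the origin are what ultimately remove any such ambiguity.
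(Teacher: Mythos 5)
Your overall strategy --- verify that the right-hand side of (\ref{B-A}) enjoys all the properties listed in Lemma \ref{lemBA} and then invoke uniqueness --- is exactly the paper's, and your treatment of single-valuedness and of the expansion at $P_+$ is fine. The genuine gap is in the verification of the gluing condition (\ref{nod}), which is the only nonstandard part of the lemma and which you yourself flag as ``the main obstacle.'' The mechanism you propose --- a symmetry of $Z$ and of the Riemann constant ${\mathcal K}$ under the involution induced by $\s$, forced by admissibility of $D$ --- is the wrong one: (\ref{nod}) holds for an \emph{arbitrary} vector $Z$ and has nothing to do with admissibility. The correct and much simpler mechanism is this: since $p_\pm^i$ are the two fixed points of $\s$ lying on the anti-invariant cycle $a_{2g_0+i}$ (see (\ref{sa1})), for any differential $\omega$ odd under $\s$ one has
$$\int_{p_-^i}^{p_+^i}\omega=\frac12\oint_{a_{2g_0+i}}\omega.$$
Applied to the normalized $d\Omega_\pm$ (zero $a$-periods) this gives $\Omega_\pm(p_+^i)=\Omega_\pm(p_-^i)$; applied to the Prym differentials it gives $0$ for $du_1,\dots,du_{g_0}$ and the \emph{integer} $\delta_{ij}$ for $du_{g_0+j}$ --- here the factor $2$ in $du_{g_0+j}=2d\omega_{g_0+j}$ is essential. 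Hence $A(p_+^i)-A(p_-^i)$ is an integer vector, i.e.\ a period of $\theta(\cdot|\Pi)$ with trivial multiplier, and the ratio $\psi(p_+^i)/\psi(p_-^i)$ is identically $1$. Your fallback of ``evaluating at $x=y=0$'' is circular: the ratio of theta-values is $(x,y)$-independent only once one knows $A(p_+^i)-A(p_-^i)$ lies in the integer part of the lattice (a component $\Pi M$ with $M\neq0$ would leave a residual factor $e^{-2\pi i(M,\,zU_++\bar zU_-)}$), and oddness alone only tells you each $A(p_\pm^i)$ is a half-period, from which the difference of two half-periods need not be a period at all.

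Two smaller points. First, your description of the pole structure is garbled: the factor $\theta(zU_++\bar zU_-+Z|\Pi)$ does not depend on $p$ and has no zeros ``along $\G$''; the poles required by (i) are the zeros of the denominator factor $\theta(A(p)+Z|\Pi)$, which must \emph{not} be cancelled, while the $p$-dependent numerator factor supplies the zero divisor of $\psi$. Second, $A(P_-)$ is not merely a half-period: using $[a_0]=-\sum_i[a_{2g_0+i}]$ and the same half-period computation, its coordinates are integers, which is what gives $\xi_0^-=1$ directly from (\ref{B-A}); your alternative of importing $\xi_0^-=1$ from the residue argument of Lemma \ref{lemBA} after establishing uniqueness is legitimate, but only once (\ref{nod}) has actually been proved.
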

\noindent
\bpf
It follows from the monodromy properties of the theta-function and the definition of the vectors $U_\pm$ that the function $\psi$ defined by the right-hand side of (\ref{B-A}) is a single-valued function of $p\in \G$. It is easy to see that this function has the required exponential singularity at the marked points $P_\pm$. Outside these points the function $\psi$ is meromorphic. Let us prove that its values at the pairs of  points $p_\pm^i,\, i=1,\ldots,n,$ coincide.

The differentials $d\Omega_\pm$ are odd with respect to the involution, i.e., $\sigma^*(d\Omega_\pm)=-d\Omega_\pm$. Therefore,
\beq\label{om}
\Omega_\pm(p_+^i)-\Omega_{\pm}(p_-^i)=\int_{p_-^i}^{p_+^i}d\Omega_\pm=\frac{1}{2}\oint_{a_{2g_0+i}}d\Omega_\pm=0.
\eeq
The Prym differentials are also odd. Therefore,
\beq\label{uuu}\int_{p_-^i}^{p_+^i}du_{j}=\frac{1}{2}\int_{a_{2g_0+i}}du_{j}=0\, \ \ j=1,\ldots,g_0,
\eeq
$$\int_{p_-^i}^{p_+^i}du_{g_0+j}=\frac{1}{2}\int_{a_{2g_0+i}}du_{g_0+j}=\delta_{ij}\in\mathbb{Z}.$$
Hence the coordinates of the vector $A(p_+^i)-A(p_-^i)$ are integers. The periodicity of  the theta-function  with respect to such vectors and relations (\ref{om}) imply  (\ref{nod}).

In a similar way one can check relations (\ref{xio}). The first of them is a direct consequence of the definition of $\psi$
by (\ref{B-A}).
To prove the second relation consider an odd cycle $a_0$ whose projection on $\G_0$ is a path connecting the points $P_\pm$. It is easy to see that it is homologous to the cycle
$$[a_0]=-\sum_{i=1}^n [a_{2g_0+i}]\in H_1(\Gamma;\mathbb{Z}).$$
This observation and (\ref{uuu}) imply that the coordinates of the vector $A(P_-)$ are integers. Hence the second relation in (\ref{xio}) holds.

The pole divisor $D=D(Z)$ of the function $\psi$ given by (\ref{B-A}) is a well-defined zero divisor of the multivalued function $\theta(A(p)+Z|\Pi)$. A standard argument proves that this divisor is of degree $g+n$. The  proof of the relation (\ref{Z}) between the vector $Z$ and the Prym--Abel transform of the divisor $D(Z)$ is also standard. To complete the proof of the lemma, it remains to prove the following assertion.
\begin{lem}\label{lem:Z} For a generic vector $Z,$ the zero divisor $D(Z)$ of $\theta(A(p)+Z|\Pi)$ is admissible, i.e., the divisor $D+D^\s$ is a zero divisor of a meromorphic differential with simple poles at the fixed points of $\s$ and residues satisfying (\ref{omegares}).
\end{lem}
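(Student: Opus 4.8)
The plan is to establish admissibility of $D=D(Z)$ by separating it into two points: (a) the linear equivalence $D+D^\s\sim K_\G+\R$, where $\R:=P_++P_-+\sum_{i=1}^n(p^i_++p^i_-)$ is the ramification divisor of the double cover $\pi:\G\to\G_0$; and (b) the residue conditions (\ref{omegares}) for the differential this produces. Once (a) is known, the space of meromorphic differentials $d\Omega$ with $(d\Omega)\ge D+D^\s-\R$ is $H^0(K_\G+\R-D-D^\s)\cong H^0(\O_\G)=\bbC$; thus there is a differential $d\Omega$, unique up to a constant, with $(d\Omega)=D+D^\s-\R$, and for generic $Z$ the divisor $D(Z)$ is reduced and avoids the fixed points, so $d\Omega$ has genuine simple poles at all $2n+2$ of them and its zero divisor is exactly $D+D^\s$. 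Point (b) then makes $D(Z)$ admissible after an overall normalization.

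For (a): first note, using that the $du_i$ are odd under $\s$ and $P_+$ is fixed, that $A(\s(p))=-A(p)$; together with the evenness of $\theta$ this identifies $D(Z)^\s$ with the zero divisor of $\theta(A(p)-Z|\Pi)$, so by (\ref{Z}) the Prym--Abel images of $D(Z)$ and $D(Z)^\s$ are $-Z+\K$ and $Z-\K$, and the Prym component of the class $[D(Z)+D(Z)^\s]$ vanishes. On the other hand, Riemann--Hurwitz gives $K_\G=\pi^*K_{\G_0}+\R$, hence $K_\G+\R=\pi^*(K_{\G_0}+B)$ with $B=\pi_*\R$ the branch divisor, so $[K_\G+\R]$ is also Prym-free. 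For generic $Z$ the reduced, $\s$-invariant, fixed-point-free divisor $D(Z)+D(Z)^\s$ equals $\pi^*\pi_*D(Z)$, so (a) reduces to the equivalence $\pi_*D(Z)\sim K_{\G_0}+B$ on $\G_0$. The two classes differ at most by the $2$-torsion element generating $\ker\pi^*$, and the $\G_0$-component of $[\pi_*D(Z)]$ is $Z$-independent (it is obtained from $[D(Z)]$ by $\pi_*$, which kills the Prym direction), so this is settled by the standard computation of the Riemann constant of the Prym variety (equivalently, by checking the equivalence at one convenient value of $Z$).

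For (b) the plan is to use the function $\psi=\psi(x,y,p)$ defined by the formula (\ref{B-A}) itself. Directly from (\ref{B-A}) — exactly as verified in the proof of Lemma \ref{lemma3.2}, and without invoking admissibility, so with no circularity — this $\psi$ satisfies (i)--(iv) of Lemma \ref{lemBA}; in particular $\xi_0^\pm\equiv1$ and $\psi(x,y,p^i_+)=\psi(x,y,p^i_-)$. Next, the divisor of $\s^*d\Omega$ is $\s^*(D+D^\s-\R)=D+D^\s-\R$, so $\s^*d\Omega=\pm d\Omega$; the sign $-$ is impossible, because a $\s$-odd differential has zero residue at every point with a $\s$-odd local coordinate, i.e.\ at all of $\R$, whereas $d\Omega$ has a genuine simple pole (hence nonzero residue) at each point of $\R$. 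Hence $\s^*d\Omega=d\Omega$. Now put $d\Omega_1:=\psi(x,y,p)\,\psi(x,y,\s(p))\,d\Omega$: the exponential singularities of the two factors cancel at $P_\pm$ (the coordinates $k_\pm^{-1}$ are $\s$-odd), and the pole divisor $D+D^\s$ of $\psi\,\psi^\s$ cancels the zeros of $d\Omega$, so $d\Omega_1$ is a global meromorphic differential whose only poles are simple poles on $\R$, all of them present for generic $(x,y)$. Using $(\psi\psi^\s)(P_\pm)=(\xi_0^\pm)^2=1$ and $(\psi\psi^\s)(p^i_+)=(\psi\psi^\s)(p^i_-)=\psi(x,y,p^i_+)^2$, the vanishing of the sum of residues of $d\Omega_1$ reads, for all $x,y$,
\beq
\bigl(\res_{P_+}d\Omega+\res_{P_-}d\Omega\bigr)+\sum_{i=1}^n\psi(x,y,p^i_+)^2\bigl(\res_{p^i_+}d\Omega+\res_{p^i_-}d\Omega\bigr)=0 .
\eeq
The functions $1,\psi(x,y,p^1_+)^2,\dots,\psi(x,y,p^n_+)^2$ carry the pairwise distinct exponential factors $e^{2z\Omega_+(p^i_+)+2\zb\,\Omega_-(p^i_+)}$ in $(z,\zb)$ and are therefore linearly independent, so every coefficient must vanish: $\res_{P_+}d\Omega+\res_{P_-}d\Omega=0$ and $\res_{p^i_+}d\Omega+\res_{p^i_-}d\Omega=0$ for each $i$. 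Rescaling $d\Omega$ so that $\res_{P_+}d\Omega=1$ now yields (\ref{omegares}), which is (b).

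The step I expect to be the real obstacle is (a), and within it the elimination of the $2$-torsion ambiguity in $\ker\pi^*$: one must verify that the $\s$-invariant divisor class carried by the zero locus of the Prym theta function is $\pi^*(K_{\G_0}+B)$ rather than its translate by the half-period that defines the cover. Pinning down this theta-characteristic datum is the crux; everything else is bookkeeping with the involution or the residue manipulation above.
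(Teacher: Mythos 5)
Your route is genuinely different from the paper's, and the two halves of your argument fare very differently. The paper does not construct $d\Omega$ by a divisor-class computation at all: it defines it explicitly by the Wronskian-type formula (\ref{Differential}), $d\Omega=2i\,dp_1/\langle\psi_y\psi^\sigma-\psi\psi_y^\sigma\rangle_x$, from which the zeros at $D+D^\sigma$, the simple poles at the fixed points of $\s$, and the residues $\pm1$ at $P_\pm$ are read off directly; the residue pairing at $p^i_\pm$ is then obtained from the completeness relation (\ref{four1}) for the Bloch expansion by letting $w_{10}^2\to 1$ and observing that the singular terms must cancel pairwise. Your part (b) is an attractive and essentially complete alternative to that last step: the sum-of-residues identity for $\psi\psi^\s d\Omega$ combined with the linear independence of $1,\psi(x,y,p^1_+)^2,\dots,\psi(x,y,p^n_+)^2$ delivers (\ref{omegares}) cleanly (modulo the genericity caveat that the exponents $\bigl(\Omega_+(p^i_+),\Omega_-(p^i_+)\bigr)$ are pairwise distinct and nontrivial, which you should state).

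Part (a), however, contains a genuine gap, and it is larger than you acknowledge. The Prym--Abel computation $\sum_s A(\g_s)+\sum_s A(\g_s^\s)=0$ controls only the anti-invariant component of the class $[D+D^\s]$; since both $D+D^\s$ and $K_\G+\R$ are pullbacks from $\G_0$, their difference $\pi^*\bigl(\pi_*D(Z)-(K_{\G_0}+B)\bigr)$ is automatically $\s$-invariant, so the Prym condition imposes \emph{no} constraint on it. The residual ambiguity is therefore an arbitrary element of $\mathrm{Pic}^0(\G_0)$ (a $g_0$-dimensional torus), not the $2$-torsion generator of $\ker\pi^*$ — indeed for a \emph{ramified} double cover $\ker\pi^*$ is trivial, and your reduction to ``at most a $2$-torsion element'' would only be available if you had already proved $\pi^*\pi_*D(Z)\sim\pi^*(K_{\G_0}+B)$, which is the statement in question. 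Pinning down the even part of $[D(Z)]$ requires a Riemann-vanishing theorem for the zero divisor of the (ramified) Prym theta function, i.e.\ an actual determination of the constant ${\mathcal K}$ in (\ref{Z}); ``checking at one convenient $Z$'' is a reasonable strategy but you have not exhibited such a $Z$ nor computed $\pi_*D(Z)$ there. The paper's explicit formula (\ref{Differential}) sidesteps this entirely, which is precisely what it buys: existence of $d\Omega$ with the correct divisor comes for free from the fact that $\psi$ solves the Schr\"odinger equation, at the price of a less transparent argument for the residue relations.
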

We prove this statement after proving  the following theorem, which is the main result of this section.

\begin{theo} The Baker-Akhiezer function  $\psi$ given by formula (\ref{B-A}) where $Z$ is a generic vector is a solution of Eq. (\ref{shrod}) with potential

\beq\label{uu}
u(x,y)=-2\Delta\ln \theta(zU_++\bar zU_-+Z|\Pi)+E, \ \ E:=4\frac{d\Omega_-}{d(k_+^{-1})}(P_+).
\eeq
If
\beq\label{per}
2\pi\ell_1 (U_++U_-)=N^a+\Pi N^b,\ \  2\pi i\ell_2 (U_+-U_-)=M^a+\Pi M^b
\eeq
for some integer vectors  $N^a, N^b$ and $M^a, M^b,$ then  the function $u(x,y)$ is $(2\pi \ell_1, 2\pi \ell_2)$-periodic and the functions $\psi_i:=\psi(x,y,p_i^\pm)$ are eigenfunctions of the operator $H$ on the space of (anti)periodic functions.
\end{theo}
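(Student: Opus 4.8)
The plan is to follow the classical Baker--Akhiezer recipe: show that the function $\psi$ of (\ref{B-A}) satisfies a second-order equation by exhibiting an operator that kills it, then identify that operator with $-\Delta+u-E$ by matching the leading coefficients of the singular parts at $P_\pm$. First I would observe that $\p_z\psi$ has, near $P_+$, an expansion $e^{k_+(x+iy)}(k_++\p_z\ln\theta(\dots)+O(k_+^{-1}))$, and similarly $\p_{\bar z}\psi$ picks up the exponential at $P_-$; here I use that $z\Omega_++\bar z\Omega_-$ contributes $zk_++\bar z k_-$ to the exponent near the two points. Then $4\p_z\p_{\bar z}\psi$ has at $P_\pm$ the same exponential type as $\psi$ itself, with leading correction coming from the $O(k_\pm^{-1})$ term of $\Omega_\mp$; concretely the coefficient $E$ in (\ref{uu}) is precisely $4\,\frac{d\Omega_-}{d(k_+^{-1})}(P_+)$ read off from the $k_+^{-1}$ term of $\Omega_+(p)+\frac{E}{4}$-type expansion at $P_+$, matched against the analogous term at $P_-$ by the $\sigma$-oddness (which forces $\frac{d\Omega_+}{d(k_-^{-1})}(P_-)=\frac{d\Omega_-}{d(k_+^{-1})}(P_+)$, so a single constant $E$ works at both points). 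Thus $\bigl(-\Delta+u(x,y)-E\bigr)\psi$ is again a function with the analytic properties (i)--(iv) of Lemma~\ref{lemBA} but \emph{without} the exponential growth enhanced — by construction of $u$ as $-2\Delta\ln\theta(zU_++\bar z U_-+Z|\Pi)+E$ the dangerous $k_\pm^{+1}$ and $k_\pm^{0}$ terms cancel, so the result lies in the same $(n+1)$-dimensional space, still satisfies the nodal conditions (\ref{nod}) because those are preserved by the constant-coefficient operator $\Delta$ and by the $x,y$-independent gluing, and has the same normalization $\xi_0^\pm$ up to the fact that it now decays; by the uniqueness in Lemma~\ref{lemBA} (a function in that space with vanishing leading coefficients at \emph{both} $P_\pm$ must be zero) it vanishes identically. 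This gives Eq.~(\ref{shrod}) with the stated $u$.

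For the periodicity statement I would argue that $u(x,y)$ depends on $x,y$ only through the argument $zU_++\bar zU_-$ of the theta-function, and that a shift $x\mapsto x+2\pi\ell_1$ changes this argument by $2\pi\ell_1(U_++U_-)$ while $y\mapsto y+2\pi\ell_2$ changes it by $2\pi i\ell_2(U_+-U_-)$ (using $z=x+iy$, $\bar z=x-iy$). Under hypothesis (\ref{per}) these shifts are of the form $N^a+\Pi N^b$ and $M^a+\Pi M^b$, i.e.\ lattice vectors for the Prym theta-function, so $\ln\theta$ changes by a function linear in the argument; since $u$ involves $\Delta\ln\theta$, that linear term is annihilated by $\Delta$ and $u$ is genuinely doubly periodic. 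Then I would track the effect of the same shifts on $\psi$ itself through the quasi-periodicity of the numerator and denominator theta-factors and through the exponential $e^{z\Omega_++\bar z\Omega_-}$: the net result is multiplication of $\psi$ by Bloch factors $w_1(p),w_2(p)$ computable from $U_\pm$ and from the periods $\oint_{b_j}d\Omega_\pm$, exactly as in the standard two-point Baker--Akhiezer construction; in particular the multipliers are $p$-dependent only, not $(x,y)$-dependent, which is what ``Bloch solution'' means. Evaluating at $p=p_\pm^i$ and using (\ref{nod}) together with the computation of the multipliers at these points shows $\psi_i:=\psi(x,y,p_\pm^i)$ is (anti)periodic — the sign (periodic vs.\ antiperiodic) being determined by whether the relevant half-period contributions land in $\mathbb Z$ or $\mathbb Z+\tfrac12$, which ties back to the zero level being an eigenlevel of $H$ on (anti)periodic functions.

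The main obstacle I expect is the bookkeeping at the marked points: one must be careful that the local coordinate $k_\pm^{-1}$ being \emph{odd} under $\sigma$ is what makes a single constant $E$ serve at both $P_+$ and $P_-$, and that the $O(k_\pm^{-2})$ normalization of $d\Omega_\pm$ is precisely strong enough to produce no stray $k_\pm^{0}$ term after $\Delta$ — any error here spoils the identification of the operator. The rest is standard Krichever-style manipulation; I would lean on Lemma~\ref{lemBA} for uniqueness so as to avoid verifying the differential equation by brute force on (\ref{B-A}). A secondary point to get right is that (\ref{per}) is stated for $U_++U_-$ and $U_+-U_-$ separately, reflecting the two independent translations, and that one genuinely needs \emph{both} conditions: periodicity in $x$ alone requires only $w_1$ to be single-valued (which, as the final Remark notes, is automatic up to an integer shift of $p$), whereas periodicity in $y$ is the genuine transcendental constraint, and it is exactly the second equation in (\ref{per}) that encodes it.
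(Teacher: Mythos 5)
Your proposal is correct and follows essentially the same route as the paper: the uniqueness argument of Lemma~\ref{lemBA} yields the Schr\"odinger equation with $u=4\p_{\bar z}\xi_1^+$, the expansions $A(p)=-2U_+k_+^{-1}+O(k_+^{-2})$ and $\Omega_-(p)=\tfrac{E}{4}\,k_+^{-1}+O(k_+^{-2})$ at $P_+$ give (\ref{uu}), and the lattice conditions (\ref{per}) make the multipliers $w_1,w_2$ single-valued, with the oddness of $d\Omega_\pm$ under $\s$ giving $w_j(\s(p))=w_j^{-1}(p)$ and hence $w_j^2(p_\pm^i)=1$, i.e.\ (anti)periodicity of $\psi_i$. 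The only cosmetic differences are that the paper packages the multipliers via the differentials $dp_{1,2}$ of (\ref{mom}) and reapplies the uniqueness of the Baker--Akhiezer function, whereas you read them off the theta quasi-periodicity directly (your ``half-period'' remark is exactly the paper's $w_j^2(p_\pm^i)=1$ step), and the reciprocity of cross-coefficients you attribute to $\s$-oddness is really a Riemann bilinear relation --- though it is not actually needed, since killing the leading coefficient at $P_+$ alone already forces the function to vanish.
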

\bpf
As shown above, for any $Z,$ the function $\psi$ satisfies all conditions defining the Baker-Akhiezer function for some divisor $D(Z)$. An argument standard in the finite gap theory and based only on the uniqueness of the Baker-Akhiezer function proves (\ref{shrod}) with potential  $u=4\p_{\zb}\xi_1^+$ where $\xi_1^+$ is the coefficient in the expansion  (\ref{ii}) of the function $\psi$ at the point $P_+$. In a neighborhood of this point we have the equations
\beq\label{BR}
A(p)=-2U_+k_+^{-1}+O(k_+^{-2}),\ \ \Omega_-(p)=Ek_+^{-1}+O(k_+^{-2}),
\eeq
The first of them is a consequence of the Riemann bilinear relations. The substitution of (\ref{BR}) into (\ref{B-A}) gives (\ref{uu}). In the general case, it defines a meromorphic quasi-periodic function of the variables $(x,y)$. For comparison with the result of the previous section, consider the differentials
\bea\label{mom}
dp_1:=d\Omega_++d\Omega_- - \sum_{j=0}^{g_0+n}i\nu_j^bdu_j , \\
dp_2:=i(d\Omega_+-d\Omega_-)-\sum_{j=0}^{g_0+n} i\mu_j^bdu_j,
\eea
where the $\nu_j^b$ and $\mu_j^b$ are the coordinates of real vectors defined by the equations
\beq\label{baz}
2\pi (U_++U_-)=\nu^a+\Pi\nu^b,\  2\pi i(U_+-U_-)=\mu^a+\Pi\mu^b.
\eeq
The periods of these differentials over the basic cycles $a_j,\ b_j\in H_1(\G,\mathbb{Z})$ equal
\beq\label{permom}
\oint_{a_j} dp_1 =-{i\nu_j^b},\ \oint_{b_j} dp_1 =i\nu_j^a,
\oint_{a_j} dp_2 =-i\mu_j^b,\ \oint_{b_j} dp_2 =i\mu_j^a.
\eeq
This implies that if relations (\ref{per}) hold, then the functions
\beq\label{w}
w_j(p)=\exp\left (2\pi \ell_j\int^p dp_j\right),
\eeq
are single-valued on the curve $\G$. They are holomorphic outside the marked points $P_\pm,$ at which they have exponential singularities. Moreover, note that, by virtue of (\ref{om}), we have $w_j(p_+^i)=w_j(p_-^i)$ for all
$i=\overline {1,n}$. Therefore, the uniqueness of the Baker-Akhiezer function implies (\ref{bloch1}). Hence the
Baker-Akhiezer function is a Bloch solution of the Schr\"odinger equation (\ref{shrod}). The differentials $d\Omega_\pm$ are odd with respect to the involution $\s$. Hence $w_j(\s(p))=w_j^{-1}(p)$. The points $p_\pm^i$ are fixed under  $\s$. Therefore, $w_j^2(p_\pm^i)=1$. This completes the proof of the theorem.

\medskip
Let us now return to the proof of Lemma \ref{lem:Z}. Since the lemma is not used in the proof of the main theorem, we present only a sketch of its proof.

\medskip
As shown above, the function $\psi$ given by (\ref{B-A}) satisfies the Schr\"odinger equation. The same argument as in the proof of Lemma 2.3 in \cite{kr-spec} gives the equation
\beq\label{Differential}
d\Omega:=\frac{2idp_1}{\langle \psi_y \psi^\sigma-\psi \psi_y^\sigma\rangle_x}=\frac{-2idp_2}{\langle \psi_x \psi^\sigma-\psi \psi_x^\sigma\rangle_y},
\eeq
where, as above, $dp_1$ and $dp_2$ are the differentials given by (\ref{mom}) and $\langle\cdot\rangle_x$ and $\langle\cdot\rangle_y$ denote averaging over the variables $x$ and $y$, respectively.

\medskip
The differential $d\Omega$ is meromorphic on $\G$. Its zeros are the poles of the functions $\psi$ and $\psi^\s,$ and  it has poles at the fixed points of $\s$. From the definition of the Baker-Akhiezer function it follows that its residues at the points
$P_\pm$ equal $\pm 1$. The proof that its residues at the other branch points satisfy relations (\ref{omegares}) requires additional arguments.

Let us prove these relations  for the curves corresponding to periodic potentials. For such curves, Eq. (\ref{w}) defines a single-valued function $w_1(p)$. Let us fix a complex number $w_{10}$ and consider the points on $\G$ for which $w_1(p_\nu)=w_{10}.$ We set  $\psi_\nu=\psi(x,y,p_\nu)$ and $\psi_\nu^+=\psi(x,y,\s(p))$. The same argument as in the proof of Lemma 2.4 in \cite{kr-spec} shows  that, that for any periodic function $f(x),$ the following series converge to the function $f(x)$:
\beq\label{four1}
f=\sum_\nu r_\nu^{-1}\langle\psi_\nu^+f\rangle_x\p_y\psi_\nu=-\sum_\nu r_\nu^{-1}\langle\p_y\psi_\nu^+f\rangle_x \psi_\nu,
\eeq
where $r_\nu:=\langle \p_y\psi_\nu \psi_\nu^\sigma-\psi_\nu \p_y\psi_\nu^\sigma\rangle_x$. Strictly speaking, the equation holds for $w_{10}^2\neq 1,$ since otherwise some of the points $p_\nu$ are the branch points $p_\pm^i$,
at which $r_\nu=0$. The left--hand side of (\ref{four1}) does not depend on $w_{10}$.
Therefore, letting $w_{10}^2\to 1$, we see that the singular terms of the series (\ref{four1}) for $w_{10}^2=1$ cancel each other. Since $f$ is arbitrary, it follows that the cancelation takes place for each pair of marked points. This is equivalent to (\ref{omegares}) which proves Lemma \ref{lem:Z}.

\begin{rem}\label{rem} The space of all curves $\G$ corresponding to periodic Schr\"odinger operators with fixed periods is of dimension $g_0-1$. Indeed, every such curve $\G$ is defined by a factor curve $\G_0$ and a set of $2n+2$ points on it. For fixed integer vectors $N^a, M^a, N^b$ and  $M^b,$ the conditions on the periods of the differentials stated in the theorem are equivalent to a system
of $2(g_0+n)$ equations. By definition the differentials depend also on the choice of the first curve and local coordinates at the marked points $P_\pm$. A linear transformation of the two-dimensional space of these jets corresponds to a linear transformation of the two periods of the potential. Hence the total dimension is equal to $3g_0-3+2n+2-2(g_0+n)=g_0-1.$
\end{rem}
As mentioned above, in general case, the potential (\ref{uu}) of the Schr\"odinger operator is a meromorphic function of its arguments. A potential takes real values at real values of the arguments if and only if there is an antiholomorphic involution $\tau$ on $\G$ which commutes with $\s$, i.e., $\s\tau=\tau \s$ (or, equivalently, the factor curve $\G_0$ is real) and the  following conditions on the parameter determining the Baker-Akhiezer function  are satisfied:
\beq\label{antiinv_cond}
\tau(P_+)=P_-,\, \tau^*(k_+)=\bar k_-,\, \tau (p_i^++p_-^i)=(p_i^++p_-^i),\, \tau(D)=D.
\eeq
Note that the third condition is equivalent to the condition that, for each $i,$ one has either $\tau (p^i_\pm)=p_\pm^i$ or $\tau (p^i_\pm)=p_\mp^i$.

The reality of the potential corresponding to data satisfying the constraints above follows from the relation
$$\bar\psi(\tau(p))=\psi(p)\,,$$
which, in turn, follows from the uniqueness of the Baker-Akhiezer function and the fact that the analytical properties of the two functions on the left-- and right--hand sides of the relation coincide.

Below we present two types of conditions sufficient for the corresponding potentials for the Schr\"odinger equation to be regular. The first of them is a direct generalization of the constraints proposed in \cite{nv1}.

Recall that, for any antiholomorphic involution of a smooth algebraic curve of genus $g,$ the number of fixed ovals of the antiholomorphic involution is at most $g+1$. The curves for which this number equals $g+1$ are called $M$-curves.

\begin{theo} Suppose that $\G$ is an $M$-curve whose antihilomorphic involution has fixed ovals $a_0,a_1, \ldots, a_g$ and holomorphic involutions acts as in (\ref{sa}) and  (\ref{sa1}). Suppose also that  $p_\pm^i\in a_{2g_0+i}$. Let the  points $\g_s$ of an admissible divisor $D$ of degree $g+n$ be such that each of the fixed ovals $a_1\ldots, a_{2g_0}$ and each of the segments into which the ovals $a_{2g_0+i}$ are partitioned by the points $p_\pm^i$ contains precisely one of these points. Then the corresponding potential is real and nonsingular.
\end{theo}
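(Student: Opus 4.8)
The plan is to prove reality and regularity of the potential (\ref{uu}) separately.

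\emph{Reality.} Since $\Gamma$ is an $M$-curve with antiholomorphic involution $\tau$ whose fixed ovals are $a_0,\dots,a_g$, one may take the $a$-cycles of the canonical basis to be these ovals; with the $a$-cycles so chosen and $\sigma$ acting by (\ref{sa})--(\ref{sa1}), one checks on homology (and then, using the fixed-point structure, on the curve) that $\sigma\tau=\tau\sigma$, so that $\Gamma_0=\Gamma/\sigma$ is real. Next I would verify the conditions (\ref{antiinv_cond}). The involution $\tau$ permutes the $\sigma$-fixed points $\{P_\pm,p^i_\pm\}$; the hypothesis $p^i_\pm\in a_{2g_0+i}$ gives $\tau(p^i_\pm)=p^i_\pm$, hence $\tau(p^i_++p^i_-)=p^i_++p^i_-$; the hypothesis that each $\gamma_s$ lies on a fixed oval gives $\tau(D)=D$; and, as in the free model of Section~2 where $\tau$ interchanges $k=0$ and $k=\infty$, the involution $\tau$ swaps $P_+\leftrightarrow P_-$, while the odd local coordinates at $P_\pm$ may be normalized so that $\tau^*k_+=\bar k_-$. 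Granting (\ref{antiinv_cond}), the argument recalled just before the theorem --- apply uniqueness of the Baker--Akhiezer function to the two functions $p\mapsto\overline{\psi(x,y,\tau(p))}$ and $p\mapsto\psi(x,y,p)$, which carry the same analytic data --- yields $\overline{\psi(x,y,\tau(p))}=\psi(x,y,p)$ for real $x,y$; inserting this into $u=4\partial_{\bar z}\xi_1^+$ shows that $u$ in (\ref{uu}) takes real values for real $x,y$.

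\emph{Reduction of regularity.} As $E$ is constant, the only possible poles of (\ref{uu}) on the real torus sit at the zeros of $\theta(zU_++\bar zU_-+Z|\Pi)$. Because $A(P_+)=0$, such a zero occurs exactly when $P_+$ belongs to the divisor $D(x,y)$, the zero divisor of $p\mapsto\theta(A(p)+zU_++\bar zU_-+Z|\Pi)$ --- that is, the pole divisor of $\psi(x,y,\cdot)$, of degree $g+n$. So it is enough to show that $P_+\notin D(x,y)$ for every real $(x,y)$.

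\emph{The divisor stays on the real ovals.} For real $(x,y)$ the reality relations for $\Pi,U_\pm,Z$ under $\tau$ (equivalently, the relation $\overline{\psi(x,y,\tau(p))}=\psi(x,y,p)$) force $D(x,y)$ to be $\tau$-invariant, hence a sum of points of the fixed ovals together with $\tau$-conjugate pairs. At $(x,y)=(0,0)$ one has $D(0,0)=D$, which by hypothesis has exactly one point on each of $a_1,\dots,a_{2g_0}$, one point on each of the two arcs into which $p^i_\pm$ split $a_{2g_0+i}$, no point on $a_0$ and no conjugate pairs. I would then show this configuration persists for all real $(x,y)$, by the Dubrovin--Natanzon argument used in \cite{nv1},\cite{dnm}: the Prym--Abel image of $D(x,y)$ equals a fixed vector minus $zU_++\bar zU_-$, so as $(x,y)$ varies it moves within the real locus of the Prym variety of $(\Gamma,\sigma)$; because $\Gamma$ is an $M$-curve this real locus has a distinguished component, characterized by the positivity lemma, all of whose divisors have precisely one point on each of $a_1,\dots,a_{2g_0}$ and one on each arc of each $a_{2g_0+i}$ (and none on $a_0$), and the continuous trajectory $(x,y)\mapsto A(D(x,y))$ begins in this component and, staying in the real locus, never leaves it. In particular $D(x,y)$ never picks up a point off the fixed ovals or on $a_0$; and $P_+$, being interchanged with $P_-$ by $\tau$ (condition (\ref{antiinv_cond})), lies on no fixed oval, so $P_+\notin D(x,y)$.

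\emph{Conclusion and the main difficulty.} Therefore $\theta(zU_++\bar zU_-+Z|\Pi)$ has no zeros for real $x,y$, so (\ref{uu}) defines a smooth real potential; by the main theorem of this section it solves Eq.\ (\ref{shrod}), and under (\ref{per}) it is $(2\pi\ell_1,2\pi\ell_2)$-periodic. The real work is the third step: one must control $D(x,y)$ over the whole real torus, not just for generic parameters, which forces the precise bookkeeping on the $\sigma$-fixed ovals $a_{2g_0+i}$ --- in particular one must rule out the collision of a divisor point with the branch points $p^i_\pm$ lying on these ovals, using the interplay of the $M$-property of $\tau$ with the $\sigma$-equivariance carried by admissibility. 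This refinement of the classical Novikov--Veselov positivity argument is the main obstacle.
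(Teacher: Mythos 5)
Your overall architecture (establish reality via the conditions (\ref{antiinv_cond}); reduce regularity to showing that $P_+$ never lies in the moving divisor $D(x,y)$) agrees with the paper, but the step you yourself label ``the main obstacle'' is precisely the step you do not prove. You defer it to an unproved ``positivity lemma'' asserting that a distinguished component of the real locus of the Prym variety consists exactly of divisors with one point per oval and per arc, plus a continuity argument whose critical ingredients (that the trajectory cannot leave that component, in particular cannot pass through a collision with $p_\pm^i$ or shed a point off the real ovals) are asserted rather than established. As written this is a plan, not a proof. A minor but confusing slip: the divisor you call $D(x,y)$, the zero locus of $\theta(A(p)+zU_++\bar zU_-+Z|\Pi)$, is the \emph{zero} divisor of $\psi(x,y,\cdot)$, not its pole divisor --- that theta factor sits in the numerator of (\ref{B-A}); the pole divisor is the fixed admissible divisor $D$. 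This distinction matters, because the fixity of the pole divisor is what drives the actual argument.

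The paper closes the gap with a short parity count that never mentions the Prym variety. Since the pole divisor of $\psi(x,y,\cdot)$ is the fixed $D$, with exactly one point on each of $a_1,\dots,a_{2g_0}$ and exactly one on each of the $2n$ segments of the ovals $a_{2g_0+i}$, and since $\overline{\psi(x,y,\tau(p))}=\psi(x,y,p)$ makes $\psi$ real-valued on the fixed ovals of $\tau$, the restriction of $\psi$ to each such oval is a real periodic function with a single pole; the total number of zeros and poles of such a function is even, so it has at least one zero on that oval. The same applies to each segment of $a_{2g_0+i}$ because condition (\ref{nod}), $\psi(x,y,p_+^i)=\psi(x,y,p_-^i)$, closes the segment into a circle --- this is the one place the gluing data at $p_\pm^i$ enter, and your argument never uses them. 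This localizes at least $2g_0+2n=g+n$ zeros, which is the full degree of the zero divisor; hence no zero of $\psi(x,y,\cdot)$ can sit at $P_+$ (which, since $\tau(P_+)=P_-$, lies on none of these ovals), so the theta function in the denominator of (\ref{uu}) never vanishes for real $(x,y)$ and the potential is regular. You should replace your third step by this counting argument, or else actually prove the component statement about the real Prym locus, which in the end reduces to the same count.
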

\bpf
The proof is standard. From formula (\ref{B-A}) it follows that the poles of the potential correspond to values of $(x,y)$ at which one of the zeros of $\psi$ coincides with $P_+$. This is impossible, since, for all $(x,y),$ each of the ovals $a_1\ldots, a_{2g_0}$ and each of the segments  $a_{2g_0+i}$ contains at least one zero,  and the total number of zeros equals $g+n=2g_0+2n$. The fact that there is at least one zero on each of the ovals and segments (at whose endpoints the values of the function $\psi$ are equal) is a corollary of the fact that the total number  of  zeros and poles of a periodic function is always even and the assumption that each of the ovals and the segments  contains one pole.

\medskip
The second type of conditions sufficient for the potential to be regular is similar to that in the theory of the KP1 equation (see \cite{kr86}).

\begin{theo} Suppose that the antiholomorphic involution  $\s\tau$  is of separating type, i.e., the complement to its fixed ovals $a_1,\ldots, a_k$ is a disjoint union of two domains $\G^\pm,\, \s\tau(\G^+)=\G^-$. Suppose also that  $p_\pm^i\in \G^\pm,$ the differential $d\Omega$ defining an admissible divisor $D$ is positive on the ovals $a_s$ with respect to the orientation induced from the domain $\G^+,$  and  $c_i:=\res_{p_i^+}d\Omega <0.$ Then the corresponding potential of the  Schr\"odinger operator is real and nonsingular.
\end{theo}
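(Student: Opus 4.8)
The plan is to follow the strategy used for the KP1 equation in \cite{kr86}. First, set $\tau:=\s\circ(\s\tau)$; this is an antiholomorphic involution commuting with $\s$, and the hypotheses force it to satisfy the conditions (\ref{antiinv_cond}). Indeed, since the $\s$-fixed points $p^i_+$ lie in $\G^+$ and $\s\tau$ interchanges $\G^+$ with $\G^-$, the map $\s$ preserves each of $\G^\pm$ and $\tau$ interchanges them, so $\tau(p^i_+)=\s(\s\tau(p^i_+))=p^i_-$ (the pairing being implicit in $p^i_\pm\in\G^\pm$), and similarly $P_\pm\in\G^\pm$ with $\tau(P_+)=P_-$; choosing the ($\s$-odd) local coordinates so that $\tau^*k_+=\bar k_-$ takes care of the first three conditions, and $\tau(D)=D$ follows because the zero divisor $D+D^\s$ of $d\Omega$ is $\tau$-invariant (the reality $(\s\tau)^*\overline{d\Omega}=-d\Omega$ being built into the hypothesis that $d\Omega$ be positive on the $\s\tau$-ovals), while the $\tau$-invariant ``half'' of it compatible with $p^i_\pm\in\G^\pm$ is $D$ itself. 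Then uniqueness of the Baker--Akhiezer function gives $\overline{\psi(x,y,p)}=\psi(x,y,\tau(p))$, so the potential (\ref{uu}) is real; in particular $\psi(x,y,p^i_+)$ is real, since $\tau(p^i_+)=p^i_-$ and $\psi(x,y,p^i_+)=\psi(x,y,p^i_-)$ by (\ref{nod}).

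For nonsingularity I would argue as follows. Fix real $(x,y)$ at which $\psi=\psi(x,y,\cdot)$ is well defined and consider the differential $d\Omega_1:=\psi\psi^\s\,d\Omega$. Because $k_\pm$ is $\s$-odd the exponential singularities of $\psi$ and $\psi^\s$ cancel at $P_\pm$, where $\psi\psi^\s=(\xi_0^\pm)^2=1$; and because the zeros of $d\Omega$ are the poles $D+D^\s$ of $\psi\psi^\s$, the differential $d\Omega_1$ is holomorphic away from the $\s$-fixed points and has simple poles at $P_\pm$ and $p^i_\pm$ only, with $\res_{P_\pm}d\Omega_1=\pm1$ and, using (\ref{nod}) and $\psi^\s(p^i_\pm)=\psi(p^i_\pm)$, $\res_{p^i_\pm}d\Omega_1=\pm c_i\,\psi(x,y,p^i_+)^2$. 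On a fixed oval $a_s$ of $\s\tau$ one has $\tau=\s$, hence $\psi^\s=\overline{\psi}$, and $D$ is disjoint from $a_s$ since $d\Omega\ne0$ there, so $d\Omega_1|_{a_s}=|\psi|^2\,d\Omega$; here $d\Omega$ restricts to a purely imaginary form on $a_s$ (this is the content of $(\s\tau)^*\overline{d\Omega}=-d\Omega$), and the positivity hypothesis means that $-i\,d\Omega$ is a positive real $1$-form along $a_s$ for the orientation of $\p\G^+$. Applying the residue theorem on the bordered surface $\overline{\G^+}$, whose interior $\s$-fixed points are exactly $P_+$ and $p^1_+,\dots,p^n_+$, I obtain
\beq\label{posid}
2\pi i\Bigl(1+\sum_{i=1}^n c_i\,\psi(x,y,p^i_+)^2\Bigr)=\oint_{\p\G^+}d\Omega_1=\sum_s\int_{a_s}|\psi|^2\,d\Omega\in i\,\bbR_{>0},
\eeq
so that $1+\sum_i c_i\,\psi(x,y,p^i_+)^2>0$. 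Since $c_i<0$ and $\psi(x,y,p^i_+)^2\ge0$, this yields the uniform bound $\sum_i|c_i|\,\psi(x,y,p^i_+)^2<1$ over all real $(x,y)$ in the domain of $\psi$.

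To finish, one argues by contradiction: if the potential (\ref{uu}) were singular there would be a real $(x_0,y_0)$ with $\theta(zU_++\zb U_-+Z)=0$, and for a generic vector $Z$ the point $p^i_+$ does not lie in the limiting zero divisor of $\psi$, so formula (\ref{B-A}) shows $\psi(x,y,p^i_+)\to\infty$ along regular points $(x,y)\to(x_0,y_0)$, contradicting the uniform bound above. Hence (\ref{uu}) is real and nonsingular.

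The step I expect to be the main obstacle is the identity (\ref{posid}): pinning down the exact pole and residue structure of $\psi\psi^\s\,d\Omega$ (the cancellation of the essential singularities at $P_\pm$ and the absorption of the poles of $\psi\psi^\s$ by the zeros of $d\Omega$), and checking that $d\Omega$ restricts to a purely imaginary form on the $\s\tau$-ovals, which is precisely what forces the combination $1+\sum_i c_i\psi(x,y,p^i_+)^2$ to have a definite sign. A subsidiary delicate point is that the admissible divisor $D$ can be, and must be, chosen $\tau$-invariant, on which the reality of the values $\psi(x,y,p^i_+)$ depends.
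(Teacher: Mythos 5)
Your overall strategy --- integrating $\psi\psi^\s\,d\Omega$ over $\p\G^+$, evaluating it by residues at the $\s$-fixed points inside $\G^+$, and playing the resulting identity against the positivity of $d\Omega$ on the ovals and the negativity of the $c_i$ --- is exactly the paper's. The identity (\ref{posid}) you worry about is indeed the heart of the matter and your derivation of it (cancellation of the exponential singularities because $k_\pm$ is $\s$-odd, absorption of the poles of $\psi\psi^\s$ by the zeros of $d\Omega$, reality of $\psi(x,y,p_+^i)$ via $\tau(p_+^i)=p_-^i$ together with (\ref{nod})) is correct; the paper is in fact sloppier than you are about the factor of $2\pi i$ between the contour integral and the sum of residues.

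Where you genuinely deviate is in how the contradiction is extracted, and this is where a gap appears. You run the argument for the \emph{normalized} function $\psi$, whose residue contribution at $P_+$ is identically $1$, so your identity only yields the bound $\sum_i|c_i|\,\psi(x,y,p_+^i)^2<1$, and you must then force a blow-up of some $\psi(x,y,p_+^i)$ at a would-be singularity. This requires $n\ge 1$ and, as you concede, the additional assumption that at a zero $(x_0,y_0)$ of $\theta(zU_++\zb U_-+Z)$ not all of the translated values $\theta(A(p_+^i)+z_0U_++\zb_0U_-+Z)$ vanish simultaneously --- a genericity condition on $Z$ that is not among the hypotheses of the theorem (which fixes $D$, hence $Z$, through the conditions on $d\Omega$). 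The paper avoids both problems by running the same residue computation for the \emph{unnormalized} function $\phi=\theta(zU_++\zb U_-+Z|\Pi)\,\psi$: then the residue of $\phi\phi^\s d\Omega$ at $P_+$ is $\theta(zU_++\zb U_-+Z|\Pi)^2$, which vanishes \emph{exactly} at the would-be singularities of the potential, so the strictly positive quantity $\oint_{\p\G^+}|\phi|^2d\Omega-\sum_i c_i|\phi(x,y,p_+^i)|^2$ (finite for all $(x,y)$, including singular ones) would have to equal zero there --- an immediate contradiction with no limiting procedure, valid for every $Z$ and every $n\ge 0$. I would recommend replacing your final step by this renormalization; the rest of your argument, including the verification that the hypotheses force the reality conditions (\ref{antiinv_cond}) and hence $\overline{\psi(x,y,p)}=\psi(x,y,\tau(p))$, can stand (though your claim that $\tau(D)=D$ follows from $\tau$-invariance of $D+D^\s$ alone is asserted rather than proved --- $\tau$ could a priori exchange $D$ with $D^\s$ --- and the paper simply assumes (\ref{antiinv_cond}) at this point).
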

\bpf Consider the un normalized Baker-Akhiezer function
\beq\label{phi}
\phi(x,y,p):=\theta(zU_++\bar z U_-+Z|\Pi)\psi(x,y,p)
\eeq
It satisfies the same analytical conditions as $\psi$ except the normalization condition(\ref{xio}).  We have already proved that the first factor in the definition of $\phi$ is real. Hence the function$\phi$ satisfies the relation  $\phi(x,y,\tau(p))=\bar \phi(x,y,p).$ By definition the ovals $a_s$ are fixed under $\s\tau,$ and their union is the boundary $\G^+$.
Hence, for the differential $d\hat\Omega=\phi\phi^\s d\Omega,$ we have
\beq\label{contradiction}
\oint_{\p\G^+}d\hat\Omega-\sum_{i=1}^n\res_{p_+^i} \ d\hat\Omega=
\oint_{\p\G^+}|\phi|^2 d\Omega -\sum_{i=1}^n c_i|\phi(x,y, p_+^i)|^2>0
\eeq
for any values of $x$ and $y$. Suppose that the potential is singular at $(x_0,y_0)$. Then $\phi(x_0,y_0,P_+)=0$. Hence the differential $d\hat \Omega(x_0,y_0,p)$ has no pole at $P_+$, i.e., in the domain $\G^+$ it has poles only at the points $p_+^i.$ Therefore, the left--hand side of Eq. (\ref{contradiction}) equals zero for $x=x_0,y=y_0$. This is a contradiction.

\medskip
{\textbf{ Example.}} In conclusion, as an example, we present Schr\"odinger operators integrable at the zero eigenlevel corresponding to hyperelliptic curves. They are a particular case of the curves considered in the framework of the generalized Novikov-Veselov construction. Without loss of generality we can assume that a hyperelliptic curve $\G$ with $n+1$ pairs of branch points is given by the equation
$$Y^2=X\prod_{i=1}^n (X-p_+^i)(X-p_-^i).$$
We identify the branch points   $X=0, \infty$ with the marked points  $P_+$ and $P_-$, respectively. As a basis of  $a$-cycles we choose the preimages of the cuts between the points $p_\pm^i$ and denote the corresponding matrix of $b$-periods of the normalized holomorphic differentials on $\G$ by $B$. Then the Prym matrix introduced above equals $\Pi=2B$.

The values of $\psi$ in Lemma \ref{lemma3.2} at the points $p_\pm^j$ equal
\beq\label{2}
 \psi_j(x,y):=
\frac{\theta(B_j+zU_++\bar z U_-+Z|2B)\theta(Z|2B)}{\theta(zU_++\bar zU_-+Z|2B)\theta(B_j+Z|2B)}
e^{zU_+^j+\bar z U_-^j}.
\eeq
\begin{cor} The functions $\psi_i$ given by formula (\ref{2}) are solutions of the Schr\"odineger equation (\ref{shrod}) with potential
$$u(z,\bar z)=-2\Delta\ln \theta(zU_++\bar zU_-+Z|2B)+E, \ E=4\frac{d\Omega_-}{d(k_+^{-1})}(P_+).$$
\end{cor}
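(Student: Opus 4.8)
The plan is to deduce the corollary directly from Lemma~\ref{lemma3.2} and the main theorem of this section (the theorem stating that the Baker--Akhiezer function~(\ref{B-A}) solves~(\ref{shrod}) with potential~(\ref{uu})) by specializing the general algebraic-geometric data to the hyperelliptic curve $Y^2=X\prod_{i=1}^n(X-p_+^i)(X-p_-^i)$. First I would verify that this curve indeed carries the structure required by the construction: the hyperelliptic involution $(X,Y)\mapsto(X,-Y)$ has exactly $2n+2$ fixed points, namely the branch points $X=0,\infty,p_\pm^i$, and identifying $X=0$ with $P_+$ and $X=\infty$ with $P_-$ leaves $n$ pairs $p_\pm^i$ of the remaining fixed points, so the data $(\G,\s,P_\pm,p_\pm^i)$ is precisely of the type treated above with $g_0=0$, $g=n$. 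The local coordinates $k_\pm^{-1}$ near $P_\pm$ must be chosen odd under $\s$; since $\s$ fixes $X$ and flips $Y$, a coordinate of the form $k_+^{-1}\sim Y/X$ (resp.\ $k_-^{-1}\sim Y$, up to normalization) near the respective branch points is odd, and I would record this normalization.

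The key computation is the identification $\Pi=2B$. Here $B$ is the $b$-period matrix of the normalized holomorphic differentials $d\omega_i$ on $\G$ with respect to the chosen $a$-basis (the preimages of the cuts $[p_-^i,p_+^i]$). Since $g_0=0$, the Prym differentials defined in the excerpt reduce to $du_i=2d\omega_i$ for $i=1,\ldots,n$ (the first group of formulas, involving $d\omega_i-d\omega_{i+g_0}$, is vacuous when $g_0=0$). Hence $\Pi_{ij}=\oint_{b_i}du_j=2\oint_{b_i}d\omega_j=2B_{ij}$, which gives $\Pi=2B$; this is the only place where any genuine computation is needed, and it is short. Substituting $\Pi=2B$ into~(\ref{B-A}) and evaluating at $p=p_\pm^j$, using $A(p_\pm^j)=B_j$ (the $j$-th column of $B$, up to integer shifts which the theta-function absorbs) and $\Omega_\pm(p_\pm^j)=U_\pm^j$ by the definitions of $A$ and of the vectors $U_\pm$, reproduces exactly formula~(\ref{2}) for $\psi_j$.

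With these identifications in hand, the corollary is immediate: Lemma~\ref{lemBA} and Lemma~\ref{lemma3.2} guarantee that the function~(\ref{B-A}) is the Baker--Akhiezer function for this (admissible, for generic $Z$) divisor, and the main theorem then asserts that it solves~(\ref{shrod}) with the potential $u(z,\bar z)=-2\Delta\ln\theta(zU_++\bar zU_-+Z|\Pi)+E$ and $E=4\,\frac{d\Omega_-}{d(k_+^{-1})}(P_+)$; replacing $\Pi$ by $2B$ gives the stated formula, and the stated $\psi_i$ are the values of $\psi$ at $p_\pm^i$, which by~(\ref{nod}) are well defined (independent of the choice of sign). I do not expect a serious obstacle here; the only point that requires a little care is bookkeeping of the odd local coordinates and of the integer ambiguities in $A(p_\pm^j)$ and in the paths defining $\Omega_\pm$, so that the exponential prefactor comes out as $e^{zU_+^j+\bar zU_-^j}$ with no extra theta-quasiperiod factors — but this is exactly the same bookkeeping already carried out in the proof of Lemma~\ref{lemma3.2} for the relations~(\ref{nod}) and~(\ref{xio}), so it can be quoted rather than repeated.
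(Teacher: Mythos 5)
Your proposal is correct and follows essentially the same route the paper intends: the corollary is the direct specialization of Lemma \ref{lemma3.2} and the theorem giving the potential (\ref{uu}) to the hyperelliptic data with $g_0=0$, where the Prym differentials reduce to $du_i=2d\omega_i$ and hence $\Pi=2B$, exactly as you compute. The one step you attribute to ``the definitions'' --- that $\Omega_\pm(p_\pm^j)=U_\pm^j$, which fixes the exponential prefactor in (\ref{2}) --- is really a short reciprocity-type evaluation on the hyperelliptic curve rather than a tautology, but since the paper itself states formula (\ref{2}) without derivation, this does not distinguish your argument from the paper's.
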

The two types of sufficient conditions  for the potential to be real and regular that were presented above correspond to two types of real hyperelliptic curves. The first one corresponds to real branch points $p_\pm^i=\bar p_\pm^i$ and the second one, to the case $p_\pm^i=\bar p_{\mp}^i$.


\begin{thebibliography}{**}
\bibitem{dkn}

B.A.~ Dubrovin, I.M.~ Krichever and S.P.~ Novikov, {\it The Schr\"{o}dinger equation in a
magnetic field and Riemann surfaces}, Dokl. Akad. Nauk SSSR, 229 (1976), 15-18;

\bibitem{dnm}B.A.~Dubrovin, V.B.~Matveev, S.P.~Novikov, {\it Non-linear Equations of Korteweg–de Vries Type,
Finite-Zone Linear Operators, and Abelian Varieties,} Russian Mathematical Surveys, 1976, 31:1, 59-146;

\bibitem{kr-mult} I. M.~Krichever, {\it Potentials with zero coefficient of reflection on a background of finite-zone potentials,} Functional Analysis and Its Applications, 1975, 9:2, 161-163;

\bibitem{kr77} I. M. ~Krichever, {\it Integration of nonlinear equations by the methods of algebraic geometry,}Functional Analysis and Its Applications, 1977, 11:1, 12-26;

\bibitem{kr-spec} I. M. ~Krichever, {\it Spectral theory of two-dimensional periodic operators and its applications,} Russian Mathematical Surveys, 1989, 44:2, 145-225;

\bibitem{kr86} I. M. ~Krichever, {\it Spectral theory of finite-zone nonstationary Schr\"{o}dinger operators. A nonstationary Peierls model,} Functional Analysis and Its Applications, 1986, 20:3, 203-214;

\bibitem{nat}
S. M.~ Natanzon, {\it Nonsingular finite-zone two-dimensional Schr\"{o}dinger operators and prymians of real curves},
Functional Analysis and Its Applications, 1988, 22:1, 68-70;

\bibitem{nv1}A.P. ~Veselov, S.P. ~Novikov, {\it Finite-zone, two-dimensional, potential Schr\"{o}dinger operators. Explicit formulas and evolution equations, } Dokl. Akad. Nauk SSSR, 279:1 (1984), 20-24;

\bibitem{nv2}
A.P. ~Veselov, S.P. ~Novikov, {\it Finite-zone, two-dimensional Schr\"{o}dinger operators. Potential operators}, Dokl. Akad. Nauk SSSR, 279:4 (1984), 784-788;

\bibitem{taim}
I.A.~ Taimanov, {\it Two-dimensional Dirac operator and the theory of surfaces}, Russian Mathematical Surveys, 2006, 61:1, 79-159.
\end{thebibliography}
\end{document}